\newtheorem{theorem}{Theorem}[section]
\newtheorem{lemma}[theorem]{Lemma}
\newtheorem{informal theorem}[theorem]{Theorem (informal statement)}
\newtheorem{proposition}[theorem]{Proposition}
\newtheorem{corollary}[theorem]{Corollary}
\newtheorem{fact}[theorem]{Fact}
\theoremstyle{definition}
\newtheorem{definition}[theorem]{Definition}
\newcommand{\dtv}{d_{\mathrm {TV}}}
\newcommand{\var}{\mathbf{Var}}
\newcommand{\cov}{\mathbf{Cov}}
\newcommand{\eps}{\epsilon}
\newcommand{\tr}{\mathrm{tr}}
\newcommand{\sym}{\mathrm{Sym}}
\newcommand{\R}{\mathbb{R}}
\newcommand{\poly}{\mathrm{poly}}
\newcommand{\E}{\mathbb{E}}
\newcommand{\bP}{\mathbf{P}}
\newcommand{\bQ}{\mathbf{Q}}
\newcommand{\bp}{\mathbf{P}}
\newcommand{\bq}{\mathbf{Q}}
\newcommand{\br}{\mathbf{R}}
\newcommand{\supp}{\mathrm{supp}}
\title{Robust Learning of Mixtures of Gaussians}
\author{
Daniel M. Kane\thanks{Supported by NSF Award CCF-1553288 (CAREER) and a Sloan Research Fellowship.}\\
University of California, San Diego\\
{\tt dakane@cs.ucsd.edu}\\
}
\begin{document}

\maketitle

\begin{abstract}
We resolve one of the major outstanding problems in robust statistics. In particular, if $X$ is an evenly weighted mixture of two arbitrary $d$-dimensional Gaussians, we devise a polynomial time algorithm that given access to samples from $X$ an $\eps$-fraction of which have been adversarially corrupted, learns $X$ to error $\poly(\eps)$ in total variation distance.
\end{abstract}

\section{Introduction}
\subsection{Background}

A Gaussian mixture is a probability distribution given as a convex combination of Gaussian distributions. Namely, a mixture of $k$ Gaussians is a probability distribution of the form $\bP = \sum_{i=1}^k w_i N(\mu_i,\Sigma_i)$ where $w_i$ are positive real valued weights that sum to $1$ and $\mu_i$, $\Sigma_i$ are the component means and covariance matrices. Gaussian mixture models are arguably the most important latent variable model with their study dating back over a century to Pearson \cite{pearson}.

Given the importance of these mixture models, it is natural to consider the problem of trying to learn an unknown mixture from samples. There is a long line of work on trying to solve this problem with different notions of error and different assumptions on the underlying mixture. A large number of papers (\cite{sep1,sep2,sep3,sep4,sep5,sep6}) have studied the problem of learning such mixtures under various separation assumptions between the components. In relatively recent work, \cite{LearningMixturesWithoutNoise1} gave the first efficient algorithm for learning mixtures of two Gaussians in parameter distance. This was later improved to a nearly optimal algorithm by \cite{LearningMixturesWithoutNoise2}. Another pair of works (\cite{mixtureparameters1,mixtureparameters2}) generalize this result to mixtures of $k$ Gaussians. There have also been a number of papers that have studied the related problem of density estimation for such mixture (\cite{density1,density2,density3,mixtureparameters1,LearningMixturesWithoutNoise2,density4}).

The field of robust statistics (see \cite{RobustStatistics1,RobustStatistics2}) attempts to understand which statistical estimation tasks can still be accomplished in the presence of outliers. While the information-theoretic aspects of many of these problems have been well understood for some time, until recently all known estimators were either computationally intractable or had error rates that scaled poorly with dimension.

It was only recently that the first works on computationally efficient, high dimensional robust statistics (\cite{dkk,lrs}) overcame this obstacle. These papers provided the first techniques giving computationally efficient algorithms that learn high dimensional distributions such as Gaussians and product distributions to small error in total variational distance even in the presence of a constant fraction of adversarial noise. Since the publication of these works, there has been an explosion of results in this area with a number of papers expanding upon the original techniques, finding ways to make these algorithms more efficient and finding ways to apply them to new classes of distributions. For a survey of the recent work in this area see \cite{RobustSurvey}.

Since the inception of robust computational statistics, the problem of robustly learning a mixture of even two arbitrary Gaussians has remained a major open problem. Although a number of works (for example \cite{dkk,clustering1,clustering2}) have solved special cases of this problem, the general case has remained illusive. In this paper, we resolve this problem, providing the first efficient, robust algorithm for learning an arbitrary (equal weight) mixture of two Gaussians.

\subsection{Our Results}

In order to introduce our results, we will first need to define the error model:
\begin{definition}[Strong Contamination Model]
We say that an algorithm has access to $\eps$-noisy samples from a distribution $X$ if the algorithm can access the following oracle a single time:

The algorithm picks a number $N$, then $N$ i.i.d. samples from $X$ are generated: $x_1,\ldots,x_N$. An adversary is then allowed to inspect these samples and replace at most $\eps N$ of them with arbitrarily chosen new samples. The algorithm is then given the list of (modified) samples.
\end{definition}

We note that this is often referred to as the strong adversary model and is the strongest of the contamination models commonly studied in robust statistics. We also note that while as stated the algorithm can only make a single call to this oracle, it is not hard to see that at the cost of slightly increasing $\eps$, it can simulate any polynomial number of calls simply by asking for a larger number of samples and randomly partitioning these samples into smaller subsets. Since the adversary does not know ahead of time what the partition is going to be, it will be unlikely that any subset will have more than $2\eps$ corrupted samples (at least assuming that the number of samples in each part is large relative to $1/\eps$ and the number of parts).

We also note that if $X$ and $X'$ are distributions that are guaranteed to have total variational distance at most $\eps$, then an algorithm can simulate (sufficiently many) $3\eps$-noisy samples from $X'$ given access to $\eps$-noisy samples from $X$. This is because a sample from $X$ can be thought of as a sample from $X'$ that is corrupted with a probability of at most $\eps$. Given a sufficiently large number of samples $N$, then with high probability at most $2\eps N$ of these samples will be corrupted in this way.

In these terms, our main theorem is easy to state:

\begin{theorem}\label{mainTheorem}
Let $G_1$ and $G_2$ be arbitrary Gaussians in $\R^d$. There exists an algorithm that given $\eps$-noisy samples to $X=(G_1+G_2)/2$ runs in time $\poly(d/\eps)$ and with probability at least $2/3$ returns a distribution $\hat X$ so that $\dtv(X,\hat X) = \poly(\eps)$.
\end{theorem}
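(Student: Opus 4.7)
The plan is to set up a dichotomy between two cases: either $X$ is $\poly(\eps)$-close to a single Gaussian (handled by standard robust Gaussian estimation), or the mixture's low-order moments reveal a low-dimensional witnessing subspace in which the component parameters can be algebraically extracted. I would begin by robustly estimating every moment tensor $\E[X^{\otimes k}]$ for $k \leq 6$ to spectral-norm error $\poly(\eps)$, using the filtering / SDP-based tensor estimation machinery already developed in the robust statistics literature. The cutoff $k=6$ is motivated by the classical Pearson six-moment method, which identifies the parameters of an equal-weight one-dimensional two-Gaussian mixture from its first six moments.

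Given these robust moments, I would form a candidate single Gaussian $\hat G$ from the robustly estimated mean and covariance, and then run a non-Gaussianity test: compare the estimated degree-$\leq 6$ moment tensors of $X$ to those of $\hat G$. If all discrepancies are $\poly(\eps)$, output $\hat G$; by standard moment-matching arguments for low-complexity mixture families, this forces $\dtv(X, \hat G) = \poly(\eps)$. Otherwise, the discrepancy exhibits a low-degree polynomial $p$ whose variable dependence isolates a subspace $V \subseteq \R^d$ of dimension $O(1)$ on which the marginal of $X$ is visibly non-Gaussian.

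In $V$, I would apply a robust analog of Pearson's method to invert the moment-to-parameter map and recover approximate marginal parameters $(\mu_i|_V, \Sigma_i|_V)$ for $i=1,2$. These approximate low-dimensional parameters can then be used to define a soft assignment of each sample to a component (via the posterior-odds function under the candidate model), and those soft assignments can be fed into a robust weighted single-Gaussian estimator to recover the full $d$-dimensional parameters of each $G_i$. The two recovered Gaussians are then mixed with equal weight to produce the final $\hat X$.

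The principal obstacle is controlling error amplification in the Pearson-style algebraic inversion. When the two marginals in $V$ are close to identical --- e.g., nearly matching variances and nearly equal means --- the moment-to-parameter Jacobian is ill-conditioned, and $\poly(\eps)$ errors in the moments can blow up to order-one errors in parameter space. The resolution must be a careful quantitative trade-off: when the non-Gaussianity witness is large, the inversion is well-conditioned and parameter recovery is accurate, while when it is small, $X$ must already be close to a single Gaussian in that direction and the two components can safely be merged or the algorithm can iterate with a refined witnessing subspace. Carrying this analysis out uniformly across all degenerate regimes --- nearly equal means, nearly equal covariances, large covariance-scale ratios --- together with the lifting step from the $O(1)$-dimensional subspace back to $\R^d$, is where I expect the bulk of the technical work to lie.
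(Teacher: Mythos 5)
There are genuine gaps, and they sit exactly where the paper has to work hardest. First, your opening step---robustly estimating the raw moment tensors $\E[X^{\otimes k}]$, $k\le 6$---is not available in the form you need. Robust mean estimation of $X^{\otimes k}$ requires a bound on $\cov(X^{\otimes k})$, which an arbitrary mixture does not have; normalizing $X$ to mean $0$ and identity covariance is itself a nontrivial robust estimation problem for a mixture, and the paper can only solve it (Proposition \ref{mixtureCovarianceExtimationProp}) after first splitting off the case $\dtv(G_1,G_2)>1-\poly(\eps)$ via the clustering result of \cite{clustering2} and using Lemma \ref{separationLem} to get $\cov(G_i)\gg\delta^2 I$. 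Even after normalization, the raw tensor $X^{\otimes k}$ has covariance with operator norm growing with $d$ (e.g.\ $\var(\langle v\otimes I, G^{\otimes 3}\rangle)$ scales with $d$ for a standard Gaussian), so the filtering machinery cannot give dimension-free accuracy for raw moments; this is precisely why the paper replaces $X^{\otimes k}$ by Hermite tensors $h_m(X)$, whose covariance is $O_m(1+\|\Sigma\|_F+|\mu|)^{2m}$ independent of $d$. Moreover, spectral-norm accuracy of degree-$\le 6$ moments does not certify $\dtv(X,\hat G)=\poly(\eps)$: take $\mu=0$ and $\Sigma=cD$ with $D$ a random $\pm1$ diagonal and $c$ small; every low-degree moment discrepancy has spectral norm $O(c)$, yet for $c\gg d^{-1/2}$ the mixture is at TV distance $\Omega(1)$ from every single Gaussian. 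The same example kills the ``$O(1)$-dimensional witnessing subspace'': the non-Gaussianity can be spread uniformly over all directions ($\|\Sigma\|_F$ large, $\|\Sigma\|_2$ small), so no constant-dimensional marginal is visibly non-Gaussian, and recovering the full $d\times d$ matrix $\Sigma$ from low-dimensional projections loses dimension-dependent factors---a loss the paper explicitly cannot afford and avoids by working with the full-dimensional tensors $\sym(\Sigma^{\otimes 2})$ and $\mu^{\otimes 6}$ directly (Propositions \ref{muEstimationProp} and \ref{SigmaEstimationProp}).

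Second, your treatment of the degenerate regimes does not cover the genuinely separated case. If one component has covariance tiny relative to the mixture scale (so $\dtv(G_1,G_2)\approx 1$), converting parameter error into TV error requires learning that component's parameters to precision proportional to its own scale; $\poly(\eps)$-accurate moments of the normalized mixture cannot deliver this, and no Pearson-style conditioning analysis fixes it---the obstruction is the parameter-to-TV conversion, not the moment-to-parameter Jacobian. The paper's resolution is the case split: hand the separated case to \cite{clustering2} wholesale, and in the non-separated case use Lemma \ref{separationLem} to guarantee every component covariance is at least $\delta^2 I$, so that Frobenius-norm parameter error $\eta$ translates to TV error $O(\eta/\delta^2)$. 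Your merge-or-iterate heuristic and the posterior-odds soft-assignment lift also need an argument in the adversarial setting (corrupted points and model error bias the weighted estimates in a correlated way), whereas the paper never clusters in the overlapping regime at all. Finally, note that the paper's parameter-recovery subroutines succeed only with $\poly(\eps)$ probability and are followed by a noisy-sample hypothesis tournament (Lemma \ref{tournamentLemma}); some analogous selection mechanism would be needed in your scheme as well.
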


\subsection{Comparison with Prior Work}

As mentioned before, \cite{LearningMixturesWithoutNoise1,LearningMixturesWithoutNoise2} show how to learn a mixture of two Gaussians without noise. These works learn the \emph{parameters} of the mixture to small error in time and samples polynomial in $d$ and $\sigma$, where $\sigma$ is the variance of the mixture. We note that this notion of parameter distance is somewhat different than the total variational distance considered in our work. However, it is not hard to show that this notion of parameter distance is stronger. In particular, the algorithms in these papers can be used to learn a mixture of two Gaussians to $\eps$-error in total variational distance in polynomial time (though this reduction is not entirely trivial). That being said, these algorithms hold only in the non-robust setting and fail very quickly when even a small amount of noise is introduced. Furthermore, while the parameter distance metric might be considered more powerful than the total variational distance metric, the latter is more natural to consider when looking at robust learning. In particular, it is easy to see that in most settings it is impossible even information-theoretically to learn $X$ to better than $\eps$ error in total variational distance when only given access to $\eps$-corrupted samples.

In terms of robust learning of mixtures of Gaussians, only limited results were known until this point. In \cite{dkk} it was shown how to learn a single Gaussian robustly to small error in total variational distance. That paper also showed how to learn mixtures of $k$ \emph{identity covariance} Gaussians in polynomial time for any constant $k$. More recently, \cite{clustering1} and \cite{clustering2} independently showed how to learn mixtures of $k$ Gaussians robustly under the assumption that the component Gaussians were highly separated in total variational distance. Our paper for the first time solves this problem in nearly full generality. It learns an arbitrary mixture of two Gaussians robustly under only the assumption that the weights are equal.

\subsection{Techniques}

It is shown in \cite{LearningMixturesWithoutNoise2} that learning sixth moments of a mixture of two Gaussians is sufficient to uniquely identify them. At a high level this will also be our strategy for learning the mixture. However, there are several problems with this strategy.

To begin with, even learning moments of a distribution at all is non-trivial in the presence of adversarial errors. A single corrupted sample can already change the empirical moments of a distribution by arbitrarily much. Fortunately, the robust statistics literature has figured out how to get around this in many cases. In particular, it has been known for some time how one can estimate the mean of a random variable $X$ robustly given that one knows that the covariance of $X$ is bounded. If one wants to learn higher moments of a distribution, one can apply this statement to learn an approximation to the mean of $X^{\otimes k}$, assuming that the covariance of this random variable is bounded.

Of course boundedness will also be a problem for us. A mixture of arbitrary Gaussians will have no a priori bounds on its covariance. Thus, an important first step will be to normalize the mixture. In particular, we need a way to approximate the mean and covariance matrix of $X$, so that by applying an appropriate affine transformation we can reduce to the problem where $X$ is close to mean $0$ and identity covariance.

It was shown in \cite{dkk} how to robustly learn an unknown covariance Gaussian to small total variational distance error (which corresponds the learning the covariance matrix in terms of a relative Frobenius norm metric). This result requires that we have a particular kind of relationship hold between the second moments of $X$ and the fourth moments of $X$. This unfortunately, does not hold for arbitrary mixtures of Gaussians, but we will show that it holds for mixtures where the components are not too far apart from each other in total variational distance.

Fortunately, we can use the result of \cite{clustering2} to learn our mixture in the case where the two components are substantially separated in total variational distance. This allows us to reduce to the case where the two components are relatively close, which is sufficient to allow us to perform the normalization procedure described above.

The next obstacle comes in estimating the higher moments of $X$ once we have normalized it. We know that we can estimate the moments of $X$ to small error given that the covariance of $X^{\otimes k}$ is bounded. Unfortunately, even under good circumstances, this is unlikely to be the case. Instead, we need to replace $X^{\otimes k}$ by an appropriate tensor of Hermite polynomials. This, we can show will have bounded covariance assuming that we are in the case where the individual components of $X$ are not too far separated.

Finally, given estimates of the higher moments of $X$ we need to be able to recover the individual components to relatively small error. This is helped by the assumption that our components are not too far separated, as it means that the distance to which we need to learn the parameters of the individual components is not too bad. For example, if one component had much smaller covariance than the other, than we might be forced to learn the parameters of that component to much higher precision in order to guarantee a small error in total variational distance. However, even learning the parameters from an approximation of the moments is non-trivial. A method is given in \cite{LearningMixturesWithoutNoise2} that does this by considering a number of one-dimensional projections, however, this technique will lose dimension-dependent factors, which we cannot afford. Instead we devise a new technique that involves random projections of higher moment tensors into lower dimensions, and doing some guessing to remove some low rank noise.

\subsection{Structure of the Paper}

We begin in Section \ref{backgroundSec} with some basic notation and results that will be used throughout the rest of the paper. In Section \ref{separatedSec}, we deal with the special case where the component Gaussians have small overlap. Then in Section \ref{reductionSec}, we show how if this is not the case we can reduce to the situation where $X$ is mean $0$ and identity covariance. Once we have done this, Section \ref{momentsSec} shows how we can robustly compute moments of $X$ and how to use them to approximate the individual components. Next, in Section \ref{combineSec}, we show how to combine everything and prove Theorem \ref{mainTheorem}. Finally, in Section \ref{conclusionsSec}, we discuss some ideas on how to further extend these results.

\section{Background}\label{backgroundSec}

\subsection{Notation}

We will use $[n]$ to denote the set $\{1,2,\ldots,n\}$. For weights $w_i\geq 0$ summing to $1$ and $\bp_i$ probability distributions, we use $\bq= \sum w_i\bp_i$ to denote the probability distribution given by their mixture. In particular, the probability density function $\bq(x)$ is given by the average of the density functions, $\bq(x) = \sum w_i \bp_i(x)$.

\subsection{Distance Between Gaussians}

We will need the following approximation of the variational distance between two Gaussians:
\begin{fact}\label{GaussianTVFact}
$$
\dtv(N(\mu_1,\Sigma_1),N(\mu_2,\Sigma_2)) = O(((\mu_1-\mu_2)\cdot \Sigma_1^{-1} (\mu_1-\mu_2))^{1/2}+\|\Sigma_1^{-1/2}\Sigma_2 \Sigma_1^{-1/2}-I \|_F ).
$$
\end{fact}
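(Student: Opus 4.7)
The plan is to reduce to a normalized situation, estimate the KL divergence, and then apply Pinsker's inequality. Write $\Delta \eqdef ((\mu_1-\mu_2)\cdot \Sigma_1^{-1}(\mu_1-\mu_2))^{1/2}$ and $F \eqdef \|\Sigma_1^{-1/2}\Sigma_2\Sigma_1^{-1/2}-I\|_F$. Since total variation distance is always at most $1$, we only need to establish the bound when both $\Delta$ and $F$ are less than some sufficiently small absolute constant $c$; otherwise the right-hand side is $\Omega(1)$ and the inequality is automatic.

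First, I would reduce to the case $\mu_1=0$ and $\Sigma_1=I$. Total variation distance is invariant under invertible affine maps, so applying $x \mapsto \Sigma_1^{-1/2}(x-\mu_1)$ turns the pair $(N(\mu_1,\Sigma_1),N(\mu_2,\Sigma_2))$ into $(N(0,I),N(\tilde\mu,M))$ with $\tilde\mu = \Sigma_1^{-1/2}(\mu_2-\mu_1)$ and $M = \Sigma_1^{-1/2}\Sigma_2 \Sigma_1^{-1/2}$. Crucially, $|\tilde\mu|=\Delta$ and $\|M-I\|_F = F$, so the quantities we are trying to bound are preserved.

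Next, I would compute the KL divergence
\[
\mathrm{KL}(N(\tilde\mu,M)\,\|\,N(0,I)) \;=\; \tfrac{1}{2}\bigl[\tr(M) - \log\det M - d + \tilde\mu^T\tilde\mu\bigr].
\]
Diagonalizing $M$ with eigenvalues $\lambda_1,\ldots,\lambda_d$, the non-mean part is $\tfrac12\sum_i (\lambda_i - \log\lambda_i - 1)$. Since $F$ is small, all $|\lambda_i - 1|$ are bounded by a small constant, so each eigenvalue sits in a compact neighborhood of $1$ where the Taylor expansion $\lambda - \log\lambda - 1 = \tfrac{1}{2}(\lambda-1)^2 + O((\lambda-1)^3)$ yields $\lambda - \log\lambda - 1 = O((\lambda-1)^2)$. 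Summing gives $\sum_i(\lambda_i-\log\lambda_i-1) = O(\|M-I\|_F^2) = O(F^2)$, hence $\mathrm{KL} = O(F^2 + \Delta^2)$.

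Finally, apply Pinsker's inequality $\dtv \le \sqrt{\mathrm{KL}/2}$ to conclude $\dtv = O(F + \Delta)$, as desired. The only mild subtlety is ensuring that we are in the regime where the Taylor bound $\lambda - \log\lambda - 1 = O((\lambda-1)^2)$ holds uniformly, which is exactly what the initial reduction to ``$F$ is small'' provides; outside this regime the target inequality is trivial.
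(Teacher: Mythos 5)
Your proof is correct, but it takes a different route from the paper. The paper disposes of the fact in two lines: it invokes the two special cases already established in \cite{dkk} (same means, different covariances; same covariance, different means) and glues them together with the triangle inequality through the intermediate Gaussian $N(\mu_2,\Sigma_1)$. You instead give a self-contained argument: normalize by the affine map $x\mapsto \Sigma_1^{-1/2}(x-\mu_1)$ (under which both $\Delta$ and $F$ are preserved), compute the closed-form KL divergence, bound the spectral part by $O(F^2)$ via the uniform Taylor estimate $\lambda-\log\lambda-1=O((\lambda-1)^2)$ valid once $F$ is below a small constant, and finish with Pinsker; the preliminary reduction to the small-$F$, small-$\Delta$ regime (using $\dtv\le 1$) is exactly what makes the Taylor step and the invertibility of $M$ legitimate, and you handle it correctly. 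What each approach buys: the paper's proof is shorter and leans on prior work, while yours avoids any external citation, handles the mean and covariance perturbations simultaneously rather than via an intermediate Gaussian, and makes the constants and regime of validity explicit; the Pinsker route is lossy in constants, but that is immaterial for an $O(\cdot)$ statement. Both are complete proofs of the stated fact.
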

\begin{proof}
The cases where $\mu_1=\mu_2$ or $\Sigma_1=\Sigma_2$ are proved in \cite{dkk}. The full result follows from noting that
$$
\dtv(N(\mu_1,\Sigma_1),N(\mu_2,\Sigma_2)) \leq \dtv(N(\mu_1,\Sigma_1),N(\mu_2,\Sigma_1))+\dtv(N(\mu_2,\Sigma_1),N(\mu_2,\Sigma_2)).
$$
\end{proof}

\subsection{Tensors}

For our purposes an $m$-tensor will be an element in $(\R^d)^{\otimes m} \cong \R^{d^m}$. This can be thought of as a vector with $m^d$ coordinates. These coordinates however, instead of being indexed $1,2,\ldots,m^d$ are index by $m$-tuples of integers from $1$ to $d$. We will often used $T_{i_1i_2\ldots i_m}$ to denote the coordinate of the $m$-tensor $T$ indexed by the $m$-tuple $(i_1,i_2,\ldots,i_m)$. By abuse of notation, we will also sometimes use this to denote the entire tensor. This allows us to make use of Einstein summation notation. Given an $m_1$-tensor $A$ and an $m_2$-tensor $B$ written as a product with $k$ of their indices in common, such as:
$$
A_{i_1i_2\ldots i_k j_{k+1}\ldots j_{m_1}}B_{i_1i_2\ldots i_k \ell_{k+1}\ldots \ell_{m_2}},
$$
This represents the $(m_1+m_2-2k)$-tensor $C$ given by summing over the shared indices. In particular,
\begin{align*}
& C_{j_{k+1}\ldots j_{m_1} \ell_{k+1}\ldots \ell_{m_2}}\\
= & \sum_{i_1,i_2,\ldots, i_k=1}^dA_{i_1i_2\ldots i_k j_{k+1}\ldots j_{m_1}}\cdot B_{i_1i_2\ldots i_k \ell_{k+1}\ldots \ell_{m_2}}.
\end{align*}
The special case where $k=0$ (there are not shared indices) we abbreviate as $C=A\otimes B$. Similarly, we write $A^{\otimes t}$ to denote the $t$-fold tensor power of $A$ with itself. In a further abuse of notation, we will at times associate vectors in $\R^d$ with $1$-tensors and matrices in $\R^{d\times d}$ with $2$-tensors.

Another important special case is where $m_1=m_2=k$. In this case $C$ is a $0$-tensor, or equivalently a single real number given by the sum of the products of the corresponding entries of $A$ and $B$. We call this the dot product of $A$ and $B$, which we denote $A\cdot B$ or $\langle A,B\rangle.$

\subsubsection{Tensor Norms}

It will be important for us to bound various norms of tensors. Perhaps the most fundamental such norm is the $L^2$ or Frobenius norm. In particular, given a tensor $A$, we denote by $\|A\|_F$ the square root of the sum of the squares of the entries of $A$. Equivalently, $\|A\|_F = \sqrt{A\cdot A}.$

Another relevant tensor norm involves a relationship between tensors and matrices. In particular, if $T$ is an $m$-tensor and $S=\{j_1,j_2,\ldots,j_k\}$ is a subset of $[m]$ we note that if $A$ is a $k$-tensor, the product
$$
T_{i_1i_2\ldots i_m}A_{i_{j_1}i_{j_2}\ldots i_{j_k}}
$$
is an $(m-k)$-tensor. This allows us to interpret $T$ as a linear transformation from the space of $k$-tensors to the space of $(m-k)$-tensors, or equivalently as a $d^k\times d^{m-k}$ matrix. We will (assuming the subset $S$ is clear from context) use $\|T\|_2$ to denote the largest singular value of this matrix.

\subsubsection{Symmetric Tensors}

Many of the tensors that we will be working with will have a large degree of symmetry. Taking advantage of this will allow us to simplify some of our formulas. To be specific if $T$ is an $m$-tensor $\pi$ is a permutation of $[m]$ we define the $m$-tensor $\pi T$ by
$$
(\pi T)_{i_1i_2\ldots i_m} = T_{i_{\pi(1)}i_{\pi(2)}\ldots i_{\pi(m)}}.
$$
We define
$$
\sym(T) = \frac{1}{m!}\sum_{\pi \in S_m} \pi T.
$$
Furthermore, if $P$ is any partition of the set $[m]$, use $\sym_P(T)$ to denote the above but averaged only over permutations $\pi$ that preserve $P$.

\subsection{Robust Statistics}

We will need a couple of basic results in robust statistics. We begin with one of the most basic results in the area, namely that an algorithm with access to noisy samples from a distribution with bounded covariance can efficiently approximate the mean of the distribution.

\begin{theorem}\label{basicCovLearnerTheorem}
Let $X$ be a distribution on $\R^d$ with $\cov(X)\leq I$, then there exists a polynomial time algorithm which given $\eps$-noisy samples from $X$ returns a $\hat\mu$ so that $|\E[X]-\hat\mu|\leq O(\sqrt{\eps})$ with high probability.
\end{theorem}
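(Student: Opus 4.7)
The plan is to use the iterative filtering algorithm for robust mean estimation under a bounded-covariance assumption. First I would draw $N = \poly(d/\eps)$ i.i.d.\ samples, at most $\eps N$ of which are then adversarially replaced; call the resulting multiset $T$ and let $G \subseteq T$ denote the (unknown) uncorrupted portion, with $|G| \geq (1-\eps)N$. Standard matrix concentration, combined with the hypothesis $\cov(X) \leq I$, gives that with high probability the empirical mean of $G$ is within $O(\sqrt{\eps})$ of $\E[X]$ and the empirical covariance of $G$ has operator norm at most $2$. The goal is therefore reduced to outputting some $S \subseteq T$ with $|S \triangle G| = O(\eps N)$ and $\|\hat\Sigma(S)\|_2 = O(1)$: on any such $S$, a Cauchy--Schwarz bound against an arbitrary unit direction gives $|\hat\mu(S) - \hat\mu(G)| = O(\sqrt{\eps\,\|\hat\Sigma(S)\|_2}) = O(\sqrt{\eps})$, which together with the concentration bound on $\hat\mu(G)$ yields the theorem.

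I would then run the following loop. Initialize $S := T$. While $\|\hat\Sigma(S)\|_2 > C$ for a fixed absolute constant $C$, let $v$ be a top unit eigenvector of $\hat\Sigma(S)$, compute scores $\tau_i := (v \cdot (x_i - \hat\mu(S)))^2$ for each $i \in S$, and remove each $i \in S$ from $S$ independently with probability $\tau_i / \max_{j \in S} \tau_j$. When the loop terminates, return $\hat\mu(S)$. Each iteration removes at least one point in expectation, so the loop halts after $\poly(N)$ rounds.

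The main obstacle is showing that this filter makes net progress, i.e.\ that in expectation it removes strictly more points of $T \setminus G$ than of $G$. This is the standard \emph{variance argument}: if $\|\hat\Sigma(S)\|_2 > C$ while the good points in $S \cap G$ contribute at most variance $2$ in any unit direction (and only an $O(\sqrt{\eps})$ bias to the mean), then the bulk of the variance in direction $v$ must come from corrupted points, forcing $\sum_{i \in S \setminus G} \tau_i$ to be substantially larger than $\sum_{i \in S \cap G} \tau_i$. A simple amortized potential then maintains the invariant $|G \setminus S| \leq |(T \setminus G) \setminus S|$ in expectation throughout the run, so when the loop halts we have simultaneously $\|\hat\Sigma(S)\|_2 \leq C$ and $|S \triangle G| = O(\eps N)$, which by the reduction above produces an $O(\sqrt{\eps})$-accurate estimate of $\E[X]$.
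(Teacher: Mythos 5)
Your proposal is correct and follows essentially the same route as the paper's proof of this theorem: the paper's Lemma \ref{filterLemma} is exactly a filter that scores points by $(v\cdot(x-\hat\mu))^2$ along the top eigendirection of the empirical covariance, removes mass in proportion to the score until the covariance is $O(I)$, and then concludes via the standard bounded-covariance mean-shift bound. The only (cosmetic) difference is that you remove points randomly with probability proportional to their scores, whereas the paper deterministically downweights the measure by the factor $(1-cf(x))$ so that the support shrinks at every step; both variants are standard and yield the same guarantee.
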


There are many proofs of Theorem \ref{basicCovLearnerTheorem}. For completeness, and in order to assist with our next result we provide one below.
\begin{lemma}\label{filterLemma}
Let $\bP$ be a discrete probability distribution in $\R^d$ with $\cov(\bP)\leq 1$ and let $\bQ$ a discrete measure on $\R^d$ so that $|\bP-\bQ|\leq \eps$ for a sufficiently small constant $\eps$. Let $\bQ'=\bQ/|\bQ|_1$ be the normalization of $\bQ$ to a probability distribution. Then either:
\begin{itemize}
\item $\cov(\bQ') \leq 3I$ in which case $|\E[\bP]-\E[\bq']| = O(\sqrt{\eps}).$
\item There exists an algorithm which given $\bq$ runs in polynomial time and returns a measure $\bq_0$ so that $|\bp-\bq_0| \leq \eps$ and $|\supp(\bq_0)|<|\supp(\bq)|$.
\end{itemize}
\end{lemma}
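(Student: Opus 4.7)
The plan is a dichotomy argument in the spirit of the filter technique of \cite{dkk}, hinging on a single Cauchy--Schwarz estimate for the mean shift. Preliminary bookkeeping: $\bigl||\bQ|_1 - 1\bigr| \le |\bP - \bQ|_1 \le \eps$, so $|\bQ - \bQ'|_1 \le \eps$ and therefore $\br := \bQ' - \bP$ is a signed measure of total mass zero and $|\br|_1 = O(\eps)$. Its Jordan decomposition $\br = \br^+ - \br^-$ satisfies $\br^+ \le \bQ'$ and $\br^- \le \bP$ pointwise, with $|\br^\pm|_1 = O(\eps)$.

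For Case 1, I would translate so that $\E[\bP] = 0$, set $\Delta = \E[\bQ']$, and take $v = \Delta/|\Delta|$. Then $|\Delta| = v \cdot \Delta = \int v \cdot x \, d\br(x)$, and Cauchy--Schwarz on each piece gives
\[ |\Delta| \le \sqrt{|\br^+|_1 \cdot \int (v \cdot x)^2 \, d\br^+} + \sqrt{|\br^-|_1 \cdot \int (v \cdot x)^2 \, d\br^-}, \]
where the first second moment is controlled by $\int (v\cdot x)^2 \, d\bQ' \le 3 + |\Delta|^2$ (using $\cov(\bQ') \le 3I$) and the second by $\Var_{\bP}(v \cdot X) \le 1$. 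The resulting inequality $|\Delta| \le C\sqrt{\eps(3 + |\Delta|^2)} + C\sqrt{\eps}$ yields $|\Delta| = O(\sqrt{\eps})$ for sufficiently small $\eps$.

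For Case 2, I would compute $\cov(\bQ')$ and extract a unit eigenvector $v$ with eigenvalue $\sigma^2 > 3$. Re-running the Cauchy--Schwarz argument in this direction (after again centering $\E[\bP]$ at the origin) yields $(v\cdot \E[\bQ'])^2 = O(\eps)(1 + \sigma^2)$. Set $\tau(x) = (v\cdot x - v\cdot \E[\bQ'])^2$ and $B_T = \{x \in \supp(\bQ) : \tau(x) > T\}$. Then $\int \tau \, d\bP \le 1 + O(\eps)(1 + \sigma^2)$ while $\int \tau \, d\bQ = |\bQ|_1 \sigma^2 \ge (1-\eps)\sigma^2$, so the layer-cake identity $\int \tau \, d\nu = \int_0^\infty \nu(B_T)\, dT$ implies $\int_0^\infty (\bQ(B_T) - 3\bP(B_T))\, dT > 0$ for small $\eps$. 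Hence some threshold $T^\star$ satisfies $\bQ(B_{T^\star}) > 3\bP(B_{T^\star})$. Defining $\bQ_0 = \bQ \cdot \mathbf{1}_{B_{T^\star}^c}$, a short pointwise case analysis splitting $B_{T^\star}$ by the sign of $\bP - \bQ$ shows
\[ |\bP - \bQ_0|_1 - |\bP - \bQ|_1 \le 3\bP(B_{T^\star}) - \bQ(B_{T^\star}) < 0, \]
so $|\bP - \bQ_0|_1 < \eps$, and $\bQ(B_{T^\star}) > 0$ forces $\supp(\bQ_0) \subsetneq \supp(\bQ)$. Algorithmically, we enumerate the $O(|\supp(\bQ)|)$ distinct values of $\tau$ on $\supp(\bQ)$ and test each $T$ against the observable upper bound $(1 + O(\eps)(1+\sigma^2))/T$ on $\bP(B_T)$, which depends only on $\bQ'$.

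The main obstacle is the filter analysis in Case 2: a priori, the large variance of $\bQ'$ in direction $v$ could be ``explained'' by $\E[\bQ']$ being shifted far from $\E[\bP]$ in that same direction, which would prevent interpreting the tail $B_T$ as populated predominantly by corruptions. The resolution is that the Case 1 Cauchy--Schwarz bound applies in \emph{any} unit direction, yielding a bound on $(v\cdot(\E[\bQ']-\E[\bP]))^2$ that scales with $\sigma^2$ but is smaller by a factor of $\eps$. Balancing constants so that the threshold $3$ in the hypothesis is exactly sufficient (modulo taking $\eps$ small) is routine but deserves a careful check.
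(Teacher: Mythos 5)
Your Case 1 argument (Cauchy--Schwarz against the Jordan decomposition of $\bQ'-\bP$) is correct, and is if anything cleaner than the paper's route, which instead writes $\bp=(1-\delta)\br+\delta\bp_0$, $\bq'=(1-\delta)\br+\delta\bq_0$ and reads the mean shift off the mixture decomposition of $\cov(\bq')$. The gap is in Case 2, and it is not the constant-balancing issue you flag (that part really is routine: your own pointwise case analysis in fact gives the sharper bound $|\bP-\bQ_0|_1-|\bP-\bQ|_1\le 2\bP(B)-\bQ(B)$, and with the factor $2$ the layer-cake comparison only needs $\sigma^2>2+O(\eps)$, which covers eigenvalues barely above $3$). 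The real problem is the algorithmic choice of the threshold. Your averaging argument proves only the \emph{existence} of $T^\star$ with $\bQ(B_{T^\star})>3\bP(B_{T^\star})$, a condition phrased in terms of the unknown $\bP$; the observable surrogate you propose --- find $T$ with $\bQ(B_T)>3\bigl(1+O(\eps)(1+\sigma^2)\bigr)/T$, justified by Markov's inequality --- may fail for \emph{every} $T$. Since $\bP$ is only assumed to have bounded covariance, Markov is the only tail bound available, and $\int \min(1,C_0/T)\,dT$ diverges, so the existence of a certifiable threshold does not follow. Concretely, let $\bP=\delta_0$ and let the adversary place mass proportional to $2^{-j}$ (total at most $\eps/2$) at geometrically growing distances along a fixed direction, scaled so that each of $m$ scales contributes about $4/m$ to the directional second moment. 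Then $\cov(\bQ')$ has an eigenvalue near $4>3$, yet one checks that $\bQ(B_T)\le \max\bigl(1+\eps,\,O(1/m)\bigr)\cdot C_0/T$ for all $T$, so for moderate constant $m$ your test never fires and the algorithm stalls even though the first bullet fails.

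This is precisely the step the paper's proof is designed to avoid: instead of a hard threshold it uses the soft reweighting $\bq_0(x)=\bq(x)(1-cf(x))$ with $f(x)=(v\cdot(x-\E[\bq']))^2$ and $1/c=\max_{\supp(\bq)}f$. No certified cutoff is needed, because correctness is an aggregate accounting, $|\bq-\bq_0|_1=c\,|\bq|_1\E_{\bq'}[f]\ge 2c\,\E_{\bp}[f]=2|\bp-\bp_0|_1$, which follows from exactly the two moment bounds you already derived ($\E_{\bq'}[f]\ge 0.9B$ and $\E_{\bp}[f]\le 1+O(B\eps)$), and the point attaining $\max f$ is zeroed out, so the support strictly shrinks. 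To rescue a hard-threshold version you would need either stronger tail bounds on $\bP$ (not available under a covariance-only assumption) or a randomized threshold analyzed in expectation; the deterministic Markov-certified test, as written, does not suffice.
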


Theorem \ref{basicCovLearnerTheorem} follows from Lemma \ref{filterLemma} by letting $\bp$ be the empirical distribution over the uncorrupted samples (perhaps scaled down slightly so that the covariance is less than $I$) and starting with $\bq$ as the empirical distribution over the noisy samples handed to the algorithm. It is clear that $|\bp-\bq|_1\leq 2\eps$ and that $\E[\bp]$ is close to $\E{X}$. The algorithm them iteratively applies Lemma \ref{filterLemma} at each step finding a new distribution whose support is smaller and smaller, eventually terminating at a distribution $\bq$ so that $\E[\bq']$ is sufficiently close to $\E[\bp]$.

We now prove Lemma \ref{filterLemma}

\begin{proof}
We begin with a proof of the first statement. If $\cov(\bq')\leq 3I$, we note that $\dtv(\bp,\bq') = O(\eps)$. Thus, for some distributions $\br,\bp_0,\bq_0$ and $\delta=2\eps$ we can write $\bp=(1-\delta)\br+\delta\bp_0$ and $\bq'=(1-\delta)\br+\delta \bq_0$. We claim that $|\E[\bp]-\E[\br]|=O(\sqrt{\delta})$ and similarly that  $|\E[\bq']-\E[\br]|=O(\sqrt{\delta})$. The final result will follow from the triangle inequality. For the latter statement (the former follows similarly), we note that
\begin{align*}
\cov(\bq') & = (1-\delta) \cov(\br) + \delta\cov(\bq_0) +\delta(1-\delta) (\E[\br]-\E[\bq_0])(\E[\br]-\E[\bq_0])^T \\ & \geq \delta/2(\E[\br]-\E[\bp_0])(\E[\br]-\E[\bp_0])^T.
\end{align*}
Since $\cov(\bq')\leq B I$, this implies that $|\E[\br]-\E[\bq_0]| = O(\sqrt{B/\delta}).$ Given that $\E[\bq'] = \E[\br]+\delta(\E[\bq_0]-\E[\br])$,  we have $|\E[\bp]-\E[\bq']| = O(\sqrt{B\eps})$.

For the latter result, we assume that $\cov(\bq')$ has largest eigenvalue $B\geq 3$. We find a unit vector $v$ so that $\var(v\cdot \bq') \geq 0.9 B$. We define a function $f(x)$ on $\R^d$ by
$$
f(x) = (v\cdot (x - \E[\bq']))^2.
$$
We define $\bq_0$ by letting $\bq_0(x)=0$ for $x$ not in the support of $\bq$ and otherwise letting $\bq_0(x) = \bq(x)(1-cf(x))$, where $1/c$ is the maximum value of $f(x)$. It is clear that $\bq_0$ has smaller support than $\bq$. It remains to show that it is closer to $\bp$. We begin by comparing the amount of mass lost to the amount that would have been lost if $\bq$ were equal to $\bp$.

Note that $\E[f(\bq')] = \var(v\cdot \bq') \geq 0.9B$. On the other than
$$\E[f(\bp)]=\var(v\cdot \bp)+(v\cdot(\E[\bq']-\E[\bp]))^2 \leq 1 + O(B\eps). $$
Where we use the bound on $|\E[\bq']-\E[\bp]|$ from above. Note that if $B\geq 3$ and $\eps$ sufficiently small we have that $|\bq|_1\E[f(\bq')] \geq 2\E[f(\bp)].$ This is enough. In particular, let $\bp_0$ be $\bp$ with the probability mass at $x$ decreased by a $(1-cf(x))$ factor for each $x$. We note that since this operation keeps the sign of each coordinate the same,
\begin{align*}
|\bp_0-\bp_0|_1 & = |\bp-\bq|_1-|(\bp-\bp_0)-(\bq-\bq_0)|_1\\
& \leq |\bp-\bq|_1 - |\bq-\bq_0|_1 + |\bp-\bp_0|_1.
\end{align*}
This implies
$$
|\bp-\bq_0|_1 \leq |\bp-\bq|_1 - |\bq-\bq_0|_1 + 2|\bp-\bp_0|_1.
$$
However, it is easy to see that $|\bp-\bp_0|_1 = c\E[f(\bp)]$ and $|\bq-\bq_0|_1 = c|\bq|_1 \E[f(\bq')].$ Combining with the above inequality completes our proof.

\end{proof}

We will also need an algorithm for learning the covariance of a random variable under appropriate conditions. The following is a generalization of the argument from \cite{dkk} for learning the covariance matrix of a Gaussian (note here that $X$ is standing in for the random variable $GG^T$).

\begin{theorem}\label{CovarianceLearnerTheorem}
Let $X$ be a distribution on $\R^{d\times d}$, where $X$ is supported on the subset of $\R^{d\times d}$ corresponding to the symmetric, positive semi-definite matrices. Suppose that $\E[X]=\Sigma$ and that for any symmetric matrix $A$ we have that $\var(\tr(AX)) = O(\sigma^2\|\Sigma^{1/2} A \Sigma^{1/2}\|_F^2).$ Then there exists a polynomial time algorithm that given sample access to an $\eps$-corrupted version of $X$ for $\eps$ less than a sufficiently small multiple of $\sigma^{-2}$ returns a matrix $\hat{\Sigma}$ so that with high probability $\|\Sigma^{-1/2}(\Sigma -\hat{\Sigma})\Sigma^{-1/2}\|_F = O(\sigma\sqrt{\eps}).$
\end{theorem}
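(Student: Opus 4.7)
The plan is to reduce the problem to robust mean estimation in the flattened space of symmetric matrices. View $X$ as a random vector in $\sym(\R^{d\times d})$ equipped with the Frobenius inner product, so $\E[X]=\Sigma$. Setting $Y := \Sigma^{-1/2}X\Sigma^{-1/2}$, we have $\E[Y]=I$ and, applying the hypothesis with $A=\Sigma^{-1/2}B\Sigma^{-1/2}$ for any symmetric $B$,
\[
\var(\tr(BY)) \;=\; \var(\tr(AX)) \;=\; O(\sigma^{2}\|B\|_F^{2}),
\]
so the covariance operator of $Y$ on $\sym$ is bounded by $O(\sigma^{2})I$ with respect to the Frobenius inner product. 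If $\Sigma$ were known, the transformed samples would be $\eps$-noisy samples of $Y$, and the multivariate form of Theorem \ref{basicCovLearnerTheorem} (scaled by $\sigma$) would produce $\hat Y$ with $\|\hat Y-I\|_F=O(\sigma\sqrt\eps)$; then $\hat\Sigma:=\Sigma^{1/2}\hat Y\Sigma^{1/2}$ would satisfy exactly the required bound $\|\Sigma^{-1/2}(\hat\Sigma-\Sigma)\Sigma^{-1/2}\|_F=O(\sigma\sqrt\eps)$.

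Because $\Sigma$ is unknown, I would run a self-consistent version of the filter algorithm from Lemma \ref{filterLemma} directly on the flattened samples. Maintain a measure $\bq$ on the noisy data (initially uniform), let $\hat\mu := \E_\bq[X]$, and measure the empirical covariance operator of the samples with respect to the $\hat\mu$-weighted inner product $\langle B_1,B_2\rangle_{\hat\mu} := \tr(B_1\hat\mu^{-1}B_2\hat\mu^{-1})$ on $\sym$, which is precisely the pullback to the $X$-side of the Frobenius inner product on the $Y$-side. If this empirical covariance is bounded by $O(\sigma^{2})$ times the identity in the $\hat\mu$-weighted sense, the analysis of Lemma \ref{filterLemma} yields $\|\hat\mu^{-1/2}(\hat\mu-\Sigma)\hat\mu^{-1/2}\|_F = O(\sigma\sqrt\eps)$ and we return $\hat\mu$; otherwise, we find a symmetric $B$ witnessing excess variance and downweight samples extreme in the scalar direction $\tr(BX)$, exactly as in the proof of Lemma \ref{filterLemma}.

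The main obstacle is closing the loop between using $\hat\mu$ (a proxy) and analyzing against the true $\Sigma$. I would resolve this with a bootstrapping argument: so long as $\|\Sigma^{-1/2}(\hat\mu-\Sigma)\Sigma^{-1/2}\|_F$ stays $O(1)$, the $\hat\mu$- and $\Sigma$-weighted inner products on $\sym$ are equivalent up to constant factors, so the filter-lemma bounds transfer with only constant-factor degradation. The hypothesis $\eps = O(\sigma^{-2})$ is precisely what ensures this invariant holds initially and is preserved throughout: the target error $O(\sigma\sqrt\eps)$ is then a small absolute constant, and each filter step only decreases the total-variation distance from $\bq$ to the underlying clean weighted distribution, whose mean concentrates on $\Sigma$ by the assumed variance bound. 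Converting the final $\hat\mu$-relative guarantee back to the required $\Sigma$-relative one costs only constant factors, completing the proof.
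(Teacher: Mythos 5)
Your overall architecture (treat this as robust mean estimation of the flattened matrix-valued variable, iterate the filter of Lemma \ref{filterLemma} in coordinates conjugated by the current weighted mean $\hat\mu=\E[\bq']$, and argue self-consistency) is the same as the paper's. The gap is in the step you call ``bootstrapping.'' The invariant you propose to maintain --- $\|\Sigma^{-1/2}(\hat\mu-\Sigma)\Sigma^{-1/2}\|_F=O(1)$ throughout, justified by ``each filter step only decreases the total-variation distance to the clean distribution'' --- is not established by that reasoning and is in fact false as a two-sided statement: keeping $|\bp-\bq|_1\le 2\eps$ does not prevent the remaining $\eps$-fraction of adversarial mass from sitting at arbitrarily large matrices, so already at initialization $\hat\mu$ can be arbitrarily far from $\Sigma$ in relative Frobenius norm. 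So ``the $\hat\mu$- and $\Sigma$-weighted inner products are equivalent up to constants'' cannot be assumed, and the transfer of the filter-lemma analysis is exactly the point that needs proof.

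What actually makes the argument work --- and what your proposal never uses --- is the hypothesis that $X$ is supported on positive semi-definite matrices. This gives a \emph{one-sided} matrix bound that holds at every stage regardless of the adversary: adversarial points can only add PSD mass to the weighted mean, while the filter removes at most $O(\eps)$ of the clean mass, which (by the variance hypothesis) shifts the clean mean by only $O(\sigma\sqrt{\eps})<1/2$ in $\Sigma$-relative norm once $\eps$ is a small multiple of $\sigma^{-2}$. Hence $\E[\bq']\geq \Sigma/2$ in the PSD order throughout, and this lower bound alone suffices: conjugating by $\E[\bq']^{-1/2}$ can only shrink the clean covariance relative to conjugating by $\Sigma^{-1/2}$, so Lemma \ref{filterLemma} applies in the $\hat\mu$-conjugated coordinates, and a larger-than-$\Sigma$ proxy does no harm. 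The two-sided comparison (hence the conversion back to the $\Sigma$-relative guarantee) is obtained only \emph{at termination}, from the filter's mean-agreement conclusion $\|\hat\Sigma^{-1/2}(\E[\bp]-\hat\Sigma)\hat\Sigma^{-1/2}\|_F<1/2$, which forces $\hat\Sigma\geq\Sigma/3$ and $\hat\Sigma=O(\Sigma)$. You also omit the preliminary truncation making $\Sigma^{-1/2}X\Sigma^{-1/2}$ bounded (needed for the filter's reweighting and for finite-sample concentration), a smaller but still necessary step. To repair your write-up, replace the claimed two-sided invariant with the PSD-based lower bound $\E[\bq']\geq\Sigma/2$ and derive the upper bound only at the end, as above.
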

\begin{proof}
We being by reducing to the case where $\Sigma^{-1/2} X \Sigma^{-1/2}$ is bounded. It is easy to see from the above bounds that $\Sigma^{-1/2} X \Sigma^{-1/2}$ has covariance bounded by $\sigma^2 I_{d^2}$. This implies that it is only with at most $\eps$ probability that $|\Sigma^{-1/2} X \Sigma^{-1/2}| \geq d^2\sigma^2/\eps$. Replacing $X$ by $X'$, the conditional distribution on this event not happening, we note that $\dtv(X,X')\leq \eps$ and so this difference can be thought of as merely increasing our noise rate by $\eps$. Furthermore, the bounded covariance implies that removing an $\eps$-probability event changes the expectation of $\Sigma^{-1/2} X \Sigma^{-1/2}$ by at most $O(\sigma \sqrt{\eps})$, and so will not change the correctness of our approximation. Furthermore, although the removal of these extreme samples might decrease the covariance of $X$, it cannot increase it substantially. This shows that it suffices for our algorithm to work for the bounded variable $X'$. We henceforth assume that $X$ is bounded in this way.

Let $\bp$ be the uniform distribution over the uncorrupted samples. Assuming that we took sufficiently many samples, it is easy to see that with high probability the following hold:
\begin{itemize}
\item $\|\Sigma^{-1/2}(\E[\bp]-\E[X])\Sigma^{-1/2}\|_F  = O(\sigma\sqrt{\eps}).$
\item For every matrix $A$, $\var(\tr(A\bp)) = O(\sigma^2\|\Sigma^{1/2} A \Sigma^{1/2}\|_F^2).$
\end{itemize}
We assume throughout the following that the above hold.

The algorithm is given a discrete measure $\bq$, the uniform distribution over the noisy samples. It is the case that $|\bp-\bq|_1 \leq 2\eps$. The algorithm will iteratively produce a sequence of such measures each with $|\bp-\bq_i|_1 \leq 2\eps$, and with smaller and smaller support until it eventually returns a hypothesis $\hat\Sigma$.

We begin by letting $\bp_0$ be the measure given by the pointwise minimum of $\bp$ and $\bq$. We note that $\bp_0$ is obtained from $\bp$ by removing $O(\eps)$ mass. Therefore, since $\Sigma^{-1/2} \bp \Sigma^{-1/2}$ has covariance $O(\sigma^2 I)$, we have that $\|\Sigma^{-1/2} (\E[\bp]-\E[\bp_0/|\bp_0|_1])\Sigma^{-1/2}\|_F = O(\sigma \sqrt{\eps})< 1/2.$ In particular, this means that $\Sigma^{-1/2}\E[\bp_0/|\bp_0|_1]\Sigma^{-1/2}\geq I/2.$ Since $\E[\bq'] \geq \E[\bp_0/|\bp_0|_1],$ we have that $\E[\bq'] \geq \Sigma/2$.

In particular, this means that if $C$ is a sufficiently large constant that
$$
\cov(\E[\bq']^{-1/2} \bp \E[\bq']^{-1/2}/(C \sigma^2)) \leq \cov(2\Sigma^{-1/2} \bp \Sigma^{-1/2}/(C \sigma^2)) \leq I_{d^2}.
$$
This allows us to apply Lemma \ref{filterLemma} to the distribution $\E[\bq']^{-1/2} \bp \E[\bq']^{-1/2}/(C \sigma^2)$ and the measure $\E[\bq']^{-1/2} \bq \E[\bq']^{-1/2}/(C \sigma^2)$ as they have $L^1$ distance at most $2\eps$ and the former has covariance bounded by the identity. This either gives us a new measure $\bq_0$ with smaller support that is not too far from $\bp$. Or it is the case that
$$
\|\E[\bq']^{-1/2} (\E[\bp] - \E[\bq']) \E[\bq']^{-1/2}\|_F = O(\sigma \sqrt{\eps}) < 1/2.
$$
Calling $\E[\bp]=\Sigma_0\approx \Sigma$ and $\E[\bq']=\hat\Sigma$ we have
$$
\|\hat\Sigma^{-1/2}(\Sigma_0-\hat\Sigma)\hat\Sigma^{-1/2}\|_F < 1/2.
$$
Therefore $\hat\Sigma \geq \Sigma_0/2 \geq \Sigma/3$. Therefore,
$$
\|\Sigma^{-1/2}(\Sigma-\hat\Sigma)\Sigma^{-1/2}\|_F = O(\sigma \sqrt{\eps}),
$$
as desired.
\end{proof}

\subsection{Moment Computations}

We will also be working heavily with higher moments of Gaussians and will need to perform some basic computations about them.

\begin{proposition}\label{momentsProp}
Let $G=N(\mu,\Sigma)$ be a Gaussian in $\R^d$ then
$$
\E[G^{\otimes m}]_{i_1\ldots i_m} = \sum_{\substack{\textrm{Partitions }P\textrm{ of }[m]\\ \textrm{ into sets of size 1 and 2}}} \bigotimes_{\{a,b\}\in P} \Sigma_{i_a,i_b} \bigotimes_{\{c\}\in P} \mu_{i_c}.
$$
\end{proposition}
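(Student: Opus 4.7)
The plan is to reduce to the centered case and then invoke Isserlis' (Wick's) theorem. Write $G = \mu + Z$ where $Z \sim N(0,\Sigma)$ has mean zero. For each fixed multi-index $(i_1,\ldots,i_m)$, expand
$$G_{i_1}\cdots G_{i_m} = \prod_{j=1}^m (\mu_{i_j} + Z_{i_j}) = \sum_{S \subseteq [m]} \left(\prod_{j \notin S} \mu_{i_j}\right)\left(\prod_{j \in S} Z_{i_j}\right),$$
and take expectations. Since $\mu$ is deterministic, the problem collapses to understanding the pure Gaussian moment $\E\!\left[\prod_{j \in S} Z_{i_j}\right]$ for each subset $S$.

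The key lemma I would then establish is Isserlis' theorem: for centered jointly Gaussian $Z$ with covariance $\Sigma$,
$$\E\!\left[\prod_{j \in S} Z_{i_j}\right] = \begin{cases} 0, & |S|\text{ odd},\\ \sum_{Q} \prod_{\{a,b\}\in Q} \Sigma_{i_a,i_b}, & |S|\text{ even},\end{cases}$$
where $Q$ ranges over perfect matchings of $S$. The cleanest route is the moment generating function: $\E[e^{\langle t,Z\rangle}] = e^{t^T \Sigma t/2}$; differentiate $|S|$ times with respect to the variables $t_{i_j}$ for $j\in S$ and set $t=0$. Expanding $e^{t^T\Sigma t/2} = \sum_{\ell\ge 0} \frac{1}{\ell!\,2^\ell}(t^T\Sigma t)^\ell$, only the term with $2\ell = |S|$ contributes, and the combinatorial task of distributing the $|S|$ derivatives among the $\ell$ quadratic factors (two derivatives per factor, with both orderings) produces exactly the sum over perfect matchings, with the $\ell!\,2^\ell$ in the denominator cancelling the overcounting.

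Having Isserlis in hand, I would combine it with the expansion above. The outer sum over $S$ selects which indices will be absorbed into $\mu$-singletons (those in $S^c$), while Isserlis' sum over matchings of $S$ selects how the remaining indices are paired into $\Sigma$-factors. The joint data $(S, Q)$ is in bijection with a partition $P$ of $[m]$ into singletons (the elements of $S^c$) and pairs (the blocks of $Q$), and the terms match block-by-block:
$$\E[G_{i_1}\cdots G_{i_m}] = \sum_{P}\, \prod_{\{a,b\}\in P} \Sigma_{i_a,i_b}\ \prod_{\{c\}\in P} \mu_{i_c},$$
which is the claimed formula once one recalls that under the author's tensor notation the products of scalars $\Sigma_{i_a,i_b}$ and $\mu_{i_c}$ are exactly the tensor-product entries on the right-hand side.

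The main obstacle is genuinely just the Isserlis step; everything else is bookkeeping. The combinatorial cancellation of $\ell!\,2^\ell$ against the number of ways to pair up indices is the only spot where care is required. If one preferred to avoid the moment generating function computation, an equally valid route is induction on $m$ using integration by parts (Stein's identity) applied to the Gaussian density of $Z$: $\E[Z_{i_1} f(Z)] = \sum_{j} \Sigma_{i_1,i_j}\E[\partial_{Z_{i_j}} f(Z)]$ with $f(Z) = Z_{i_2}\cdots Z_{i_m}$ produces the recursion that matches the partition sum.
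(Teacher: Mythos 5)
Your proof is correct, but it takes a genuinely different route from the paper's. You argue entrywise: write $G=\mu+Z$, expand the product over subsets $S\subseteq[m]$, prove Isserlis'/Wick's theorem for the centered factor via the moment generating function, and then observe that a pair (complement $S^c$ of singletons, perfect matching $Q$ of $S$) is precisely a partition of $[m]$ into sets of size one and two, so the terms match block by block. The paper avoids Wick's theorem entirely: since both sides are symmetric tensors, it suffices to check equality after contracting with $v^{\otimes m}$ for every vector $v$; the left side is then the $m$-th moment of the one-dimensional Gaussian $v\cdot G$, with mean $v\cdot\mu$ and variance $v^T\Sigma v$, computed from the binomial theorem and $\E[H^{2\ell}]=(2\ell-1)!!$, while each partition with $\ell$ pairs contributes $(v^T\Sigma v)^\ell(v\cdot\mu)^{m-2\ell}$ to the right side and the number of such partitions is $\binom{m}{2\ell}(2\ell-1)!!$. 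Your approach is more self-contained at the level of individual coordinates and gives the full multivariate Wick formula as a byproduct, at the cost of the $\ell!\,2^\ell$ bookkeeping in the MGF differentiation (and a little extra care when some of the indices $i_j$ coincide, since then several derivatives land on the same variable $t_{i_j}$); the paper trades that for the one-line polarization observation that a symmetric tensor is determined by its contractions with the diagonal tensors $v^{\otimes m}$, plus a simpler univariate counting argument. Either route establishes the proposition.
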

\begin{proof}
We note that it is enough to show that both sides are equal after dotting them with $v^{\otimes m}$ for any vector $v$. The left hand side becomes $\E[(v\cdot G)^m]$. We note that $v\cdot G$ is a Gaussian with mean $v\cdot \mu$ and variance $v^T \Sigma v$. Letting $H$ be the standard Gaussian, this yields $\E[(v\cdot \mu + \sqrt{v^T \Sigma v}H)^m]$. Expanding with the binomial theorem yields
$$
\sum_{k=0}^m \binom{m}{k} (v^T \Sigma v)^{k/2} (v\cdot \mu)^{m-k} \E[H^k] = \sum_{\ell=0}^{\lfloor m/2 \rfloor} \binom{m}{2\ell} (v^T \Sigma v)^{\ell} (v\cdot \mu)^{m-2\ell} (2\ell-1)!!
$$

When we dot the right hand side with $v^{\otimes m}$ on the other hand, each term in the sum contributes $(v^T \Sigma v)^\ell (v\cdot \mu)^{m-2\ell}$ where $\ell$ is the number of pairs in the partition $P$. The number of such partitions with exactly $\ell$ pairs is $\binom{m}{2\ell}(2\ell-1)!!$. This is because there are $\binom{m}{2\ell}$ ways to choose which elements are in the $\ell$ pairs and once that is decided, $(2\ell-1)!!$ ways to pair them up. Summing over $\ell$ gives the same expression as the above, proving our proposition.
\end{proof}

Unfortunately, the higher moments will often be difficult to compute directly (at least in the robust setting). Instead, we will need to get at them indirectly through a slightly different set of ``moments''. The following tensors correspond to the standard multivariate Hermite polynomials.

\begin{definition}
Define the degree-$m$ Hermite polynomial tensor as
$$
h_m(x) :=  \sum_{\substack{\textrm{Partitions }P\textrm{ of }[m]\\ \textrm{ into sets of size 1 and 2}}} \bigotimes_{\{a,b\}\in P} (-I_{i_a,i_b}) \bigotimes_{\{c\}\in P} x_{i_c}.
$$
\end{definition}

\begin{lemma}\label{HermiteExpectationLem}
If $G=N(\mu,I+\Sigma)$ then
$$
\E[h_m(G)] = \sum_{\substack{\textrm{Partitions }P\textrm{ of }[m]\\ \textrm{ into sets of size 1 and 2}}} \bigotimes_{\{a,b\}\in P} \Sigma_{i_a,i_b} \bigotimes_{\{c\}\in P} \mu_{i_c}.
$$
\end{lemma}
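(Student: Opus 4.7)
My plan is to mirror the proof strategy of Proposition \ref{momentsProp}: reduce to a one-dimensional identity by dotting both sides with $v^{\otimes m}$ for an arbitrary vector $v\in\R^d$. Since both sides are manifestly symmetric in the indices $i_1,\ldots,i_m$, equality after this contraction (for every $v$) is equivalent to equality as tensors.

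For the left-hand side, the definition of $h_m$ gives
$$
h_m(x)\cdot v^{\otimes m} \;=\; \sum_{\ell=0}^{\lfloor m/2\rfloor}\binom{m}{2\ell}(2\ell-1)!!\,(-|v|^2)^\ell(v\cdot x)^{m-2\ell} \;=\; |v|^m\,H_m\!\left((v\cdot x)/|v|\right),
$$
where $H_m$ denotes the one-dimensional probabilist's Hermite polynomial. Writing $Y=(v\cdot G)/|v|$, we have $Y\sim N(a,b^2)$ with $a=(v\cdot\mu)/|v|$ and $b^2=1+(v^T\Sigma v)/|v|^2$, so $\E[h_m(G)]\cdot v^{\otimes m}=|v|^m\,\E[H_m(Y)]$.

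Next I would compute $\E[H_m(Y)]$ via the generating function $\sum_m H_m(y)t^m/m!=e^{ty-t^2/2}$. Taking expectations and using the Gaussian MGF yields
$$
\sum_{m\ge 0}\E[H_m(Y)]\,\frac{t^m}{m!} \;=\; e^{-t^2/2}\,\E[e^{tY}] \;=\; e^{ta+t^2(b^2-1)/2},
$$
so the coefficient of $t^m/m!$ is $\sum_k\binom{m}{2k}(2k-1)!!\,a^{m-2k}(b^2-1)^k$. Multiplying by $|v|^m$ and substituting the values of $a$ and $b^2-1$, the powers of $|v|$ cancel to give
$$
\sum_k\binom{m}{2k}(2k-1)!!\,(v\cdot\mu)^{m-2k}(v^T\Sigma v)^k.
$$
By the same partition count used in the proof of Proposition \ref{momentsProp}, this is exactly what the right-hand side of the lemma produces when dotted with $v^{\otimes m}$, which completes the proof.

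The main conceptual point, and the step that requires the most care, is recognizing that the $-I$ pairings built into $h_m$ are precisely calibrated to cancel the $I$-contribution in the covariance $I+\Sigma$, leaving only the excess covariance $\Sigma$. The generating-function calculation makes this transparent: the factor $e^{-t^2/2}$ on the Hermite side exactly annihilates the identity piece of the Gaussian MGF, so only the excess variance $b^2-1=v^T\Sigma v/|v|^2$ appears. A purely combinatorial alternative would be to expand $h_m(G)$ by linearity, apply Proposition \ref{momentsProp} term-by-term, split each $(I+\Sigma)$ in a resulting pair as $I+\Sigma$, and then for any fixed final partition of $[m]$ with $k\ge 1$ pairs labeled $I$, note that the signed count $\sum_{S\subseteq[k]}(-1)^{|S|}=0$ kills all such contributions, leaving only the pure-$\Sigma$ partitions.
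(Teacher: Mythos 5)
Your argument is correct, but it takes a genuinely different route from the paper. The paper works entirely at the level of tensors and partitions: it expands $\E[h_m(G)]$ using the definition of $h_m$ together with Proposition \ref{momentsProp} applied to the singleton factors, reorganizes the double sum into ``marked'' partitions of $[m]$ whose pairs are labeled as coming from the $-I$ of $h_m$ or the $(I+\Sigma)$ of the moment, and then sums over the labels of each pair to get $(I+\Sigma)-I=\Sigma$. You instead contract both sides with $v^{\otimes m}$ (the same reduction the paper uses for Proposition \ref{momentsProp}, justified by symmetry/polarization), identify $h_m(x)\cdot v^{\otimes m}$ with $|v|^m H_m((v\cdot x)/|v|)$, and compute $\E[H_m(Y)]$ for $Y\sim N(a,b^2)$ via the generating function $e^{ty-t^2/2}$ and the Gaussian moment generating function; the factor $e^{-t^2/2}$ cancels the identity part of the variance, leaving exactly the claimed coefficient $\sum_k\binom{m}{2k}(2k-1)!!\,(v\cdot\mu)^{m-2k}(v^T\Sigma v)^k$, which matches the right-hand side by the same partition count as in Proposition \ref{momentsProp}. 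Your route is shorter and makes the calibration of the $-I$ pairings conceptually transparent, at the mild cost of invoking the polarization fact and an (easily justified, since each $\E[H_m(Y)]$ is a finite sum of Gaussian moments) interchange of expectation with the power series; the paper's index-level cancellation is more pedestrian but has the advantage that the identical manipulation carries over verbatim to the non-fully-symmetric second-moment tensor in Lemma \ref{HermiteVarianceLem}, where a single contraction with $v^{\otimes 2m}$ would no longer determine the tensor. Your closing remark sketching the combinatorial alternative (splitting each $(I+\Sigma)$ and cancelling via $\sum_{S\subseteq[k]}(-1)^{|S|}=0$) is in substance the paper's own proof.
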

\begin{proof}
By the definition of the Hermite tensor and Proposition \ref{momentsProp}, we have that
\begin{align}
& \ \E[h_m(G)] \nonumber\\ & = \sum_{\substack{\textrm{Partitions }P\textrm{ of }[m]\\ \textrm{ into sets of size 1 and 2}}} \bigotimes_{\{a,b\}\in P} (-I_{i_a,i_b})\E\left[ \bigotimes_{\{c\}\in P} x_{i_c}\right]\label{expectationPartitionEquation}\\
& = \sum_{\substack{\textrm{Partitions }P\textrm{ of }[m]\\ \textrm{ into sets of size 1 and 2}}} \sum_{\substack{\textrm{Partitions }Q\textrm{ of }S\\\textrm{where }S\textrm{ is the set of singletons in }P \\\textrm{ into sets of size 1 and 2}}} \bigotimes_{\{a,b\}\in P} (-I_{i_a,i_b})\bigotimes_{\{a,b\}\in Q} (I+\Sigma)_{i_a,i_b} \bigotimes_{\{c\} \in Q} \mu_{i_c}.\nonumber
\end{align}
Combining the two sums, we note that this is equivalent to partitions $R$ of $[m]$ into sets of size $1$ and $2$ with the sets of size $2$ being marked as type $1$ (coming from $P$) or type $2$ (coming from $Q$). Thus, this is
$$
\sum_{\substack{\textrm{Marked partitions }R\textrm{ of }[m]\\ \textrm{ into sets of size 1 and 2}}}\bigotimes_{\{a,b\}\in R \textrm{ type 1}} (-I_{i_a,i_b})\bigotimes_{\{a,b\}\in R \textrm{ type 2}} (I+\Sigma)_{i_a,i_b} \bigotimes_{\{c\} \in Q} \mu_{i_c}.
$$
However, if we fix $R$ and sum over the choices for each part of size $2$ of whether it is type $1$ or type $2$, we get
$$
\sum_{\substack{\textrm{Partitions }R\textrm{ of }[m]\\ \textrm{ into sets of size 1 and 2}}}\bigotimes_{\{a,b\}\in R} ((I+\Sigma)-I)_{i_a,i_b} \bigotimes_{\{c\} \in Q} \mu_{i_c},
$$
which is easily seen to be equal to the desired quantity.
\end{proof}

Finally, we will also need to bound the covariance of the Hermite polynomial tensors. For this we have the following lemma.

\begin{lemma}\label{HermiteVarianceLem}
If $G=N(\mu,I+\Sigma)$ then $\E[h_m(G)\otimes h_m(G)]$ equals
$$
\sum_{\substack{\textrm{Partitions }P\textrm{ of }[2m]\\ \textrm{ into sets of size 1 and 2}}} \bigotimes_{\substack{\{a,b\}\in P\\a,b\textrm{ in same half of }[2m]}} \Sigma_{i_a,i_b} \bigotimes_{\substack{\{a,b\}\in P\\a,b\textrm{ in different halves of }[2m]}} (I+\Sigma)_{i_a,i_b}\bigotimes_{\{c\}\in P} \mu_{i_c}.
$$
\end{lemma}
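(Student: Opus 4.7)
The plan is to mirror essentially exactly the proof of Lemma \ref{HermiteExpectationLem}. First I would expand $h_m(G)\otimes h_m(G)$ by applying the definition of $h_m$ to each factor independently. If we think of the first factor as indexed by $[m]$ and the second by $\{m+1,\ldots,2m\}$, this gives a double sum over partitions $P_1$ of the first half and $P_2$ of the second half, each into singletons and pairs. Pairs in $P_1\cup P_2$ contribute factors of $-I_{i_a,i_b}$, and singletons contribute factors $G_{i_c}$.

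Next I would take the expectation inside the double sum and apply Proposition \ref{momentsProp} to the product of $G$-coordinates over the singleton set $S=S(P_1)\cup S(P_2)$. This produces a further sum over partitions $Q$ of $S$ into singletons and pairs, where $Q$-pairs contribute $(I+\Sigma)_{i_a,i_b}$ and $Q$-singletons contribute $\mu_{i_c}$. At this stage the expectation has been written as a triple sum over $(P_1,P_2,Q)$.

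The key combinatorial step, exactly as in Lemma \ref{HermiteExpectationLem}, is to re-index this triple sum as a single sum over partitions $R$ of $[2m]$ into singletons and pairs, with each pair of $R$ additionally carrying a \emph{type} label $\in\{1,2\}$: type 1 pairs come from $P_1\cup P_2$ (so must be contained in a single half) and type 2 pairs come from $Q$ (no restriction on halves). Singletons of $R$ correspond to singletons of $Q$.

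Finally, I would swap the order of summation and, for each fixed unmarked partition $R$, sum over admissible markings of its pairs. A pair of $R$ whose endpoints lie in different halves is forced to be of type 2, contributing $(I+\Sigma)_{i_a,i_b}$. A pair contained in one half can be either type 1 or type 2, so its two contributions add to $-I_{i_a,i_b}+(I+\Sigma)_{i_a,i_b}=\Sigma_{i_a,i_b}$. Singletons still contribute $\mu_{i_c}$. This is precisely the claimed expression. There is no real obstacle here beyond the bookkeeping of how the three partitions combine into one marked partition; the main care is keeping track of which pairs are permitted to cross the halves, which is exactly what produces the split between $\Sigma$ (same half) and $I+\Sigma$ (different halves) in the final formula.
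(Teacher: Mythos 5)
Your proposal is correct and is essentially identical to the paper's argument: the paper proves this lemma by rerunning the proof of Lemma \ref{HermiteExpectationLem} with the sole modification that the partition coming from the Hermite expansion (your $P_1\cup P_2$) may not contain pairs crossing the two halves, so crossing pairs are forced to be type 2 and contribute $(I+\Sigma)$ while same-half pairs sum over the two types to give $\Sigma$. Your bookkeeping of the triple sum over $(P_1,P_2,Q)$ and its re-indexing as a marked partition of $[2m]$ matches the intended proof exactly.
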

\begin{proof}
The proof is essentially the same as that of Lemma \ref{HermiteExpectationLem}. The primary difference is that in our version of Equation \eqref{expectationPartitionEquation} we will only allow $P$ to contain pairs that do not cross between different halves of $[2m]$. This means that for the partition $R$, only pairs that do not cross can be type 1, and thus these pairs contribute $(I+\Sigma)$ rather than just $\Sigma.$
\end{proof}

\subsection{Tournaments}

In order to learn our mixture of Gaussians in full generality, we will need to have different algorithms for different cases and will need to make several correct guesses in order to succeed. By considering all possible combinations of guesses, the algorithm will end up with a number of hypothesis distributions at least one of which is guaranteed to be close to the true one. From this point, we will need to run a tournament in order to find a hypothesis that is not too far away. This is by now a fairly standard procedure in learning theory, though we need to verify here that this can be done even with only access to $\eps$-noisy samples.

\begin{lemma}\label{tournamentLemma}
Let $X$ be an unknown distribution and let $H_1,\ldots,H_n$ be distributions with explicitly computable probability density functions that can be efficiently sampled from. Assume furthermore than $\min_{1\leq i\leq n}(\dtv(X,H_i)) \leq \eps$. Then there exists an efficient algorithm that given access to $\eps$-noisy samples from $X$ along with $H_1,\ldots,H_n$ computes a $1\leq m\leq n$ so that with high probability
$$
\dtv(X,H_m) = O(\eps).
$$
\end{lemma}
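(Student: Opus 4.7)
The plan is to adapt the classical Scheffe / Yatracos minimum-distance tournament to the $\eps$-contamination model; the key insight is that estimating the probability of a fixed measurable set from noisy samples is trivially robust, because a single corrupted sample shifts an empirical fraction by at most $1/N$. For each ordered pair $(i,j)$ define the Scheffe set $S_{ij} = \{x : H_i(x) > H_j(x)\}$, which is decidable pointwise since the densities are explicitly computable, and set $p_{ij} = H_i(S_{ij})$ so that $\dtv(H_i,H_j) = p_{ij} - p_{ji}$. Estimate each $p_{ij}$ to additive error $\eps$ by drawing $\poly(n/\eps)$ samples from $H_i$ and computing the empirical fraction lying in $S_{ij}$; a Chernoff plus a union bound over the $O(n^2)$ pairs makes all these estimates simultaneously accurate with high probability. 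Call the resulting approximations $\hat p_{ij}$.

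Next, make a single call to the $\eps$-noisy oracle for $X$, requesting $N = \poly(n/\eps)$ samples, and for each pair let $\tau_{ij}$ be the empirical fraction of these samples lying in $S_{ij}$. If the uncorrupted empirical fraction is $\tilde{\tau}_{ij}$, then $|\tilde{\tau}_{ij} - X(S_{ij})| \leq O(\sqrt{\log(n)/N})$ by a Chernoff bound and a union bound, while $|\tau_{ij} - \tilde{\tau}_{ij}| \leq \eps$ because at most $\eps N$ samples were modified and each such modification changes the count by $1/N$. For $N$ polynomial in $n/\eps$, this gives $|\tau_{ij} - X(S_{ij})| \leq O(\eps)$ uniformly over all pairs.

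The algorithm outputs $\hat m = \arg\min_{i} \max_{j \neq i} |\tau_{ij} - \hat p_{ij}|$. To analyze correctness, let $i^{*}$ be any index with $\dtv(X, H_{i^*}) \leq \eps$; then $|X(S_{i^* j}) - p_{i^* j}| \leq \eps$ for every $j$, so the triangle inequality and the estimates above give $\max_j |\tau_{i^* j} - \hat p_{i^* j}| = O(\eps)$. Hence $\max_j |\tau_{\hat m j} - \hat p_{\hat m j}| = O(\eps)$ as well. Specializing to $j = i^*$ and chaining three triangle inequalities ($\tau_{\hat m, i^*}$ close to $\hat p_{\hat m, i^*}$, close to $X(S_{\hat m, i^*})$, close to $H_{i^*}(S_{\hat m, i^*})$) yields $|H_{\hat m}(S_{\hat m, i^*}) - H_{i^*}(S_{\hat m, i^*})| = O(\eps)$. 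Since $S_{\hat m, i^*}$ is the witnessing set of $\dtv(H_{\hat m}, H_{i^*})$, we conclude $\dtv(H_{\hat m}, H_{i^*}) = O(\eps)$, and finally $\dtv(X, H_{\hat m}) \leq \dtv(X, H_{i^*}) + \dtv(H_{i^*}, H_{\hat m}) = O(\eps)$.

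The only conceptual obstacle is the interaction of the tournament with the single-query adversarial model: in non-robust versions one classically uses fresh samples for each pair, whereas here all $\tau_{ij}$ must be computed from one batch of $N$ noisy samples. This is absorbed by the observation above that the adversary's leverage on any fixed indicator estimate is bounded by $\eps$ regardless of how many such statistics we read out of the same sample set, so no additional machinery from the earlier sections is needed.
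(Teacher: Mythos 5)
Your proposal is correct and follows essentially the same route as the paper: both use the Scheffe sets $\{x: H_i(x)>H_j(x)\}$, exploit the fact that the empirical probability of any fixed set computed from one batch of $\eps$-noisy samples is off by only $O(\eps)$, estimate the hypotheses' own set probabilities by sampling, and then select a hypothesis whose set probabilities match the empirical ones, concluding via the witnessing-set identity and a triangle inequality. The only (immaterial) difference is your selection rule---an $\arg\min$ over the maximum discrepancy on sets $S_{ij}$ indexed by the candidate itself, versus the paper's acceptance of any $m$ whose discrepancies on all pairs are below a $4\eps$ threshold---and both give $\dtv(X,H_{\hat m})=O(\eps)$.
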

\begin{proof}
For each $i,j$ define the set $A_{i,j}$ to be the set of point where the probability density of $H_i$ is bigger than the probability density of $H_j$. We note in particular that $\dtv(H_i,H_j)=|H_i(A_{i,j})-H_j(A_{i,j})|$. Taking enough samples from $X$, we can ensure that with high probability $X(A_{i,j})$ is within $\eps$ of the fraction of the uncorrupted samples lying in $A_{i,j}$ for each $i,j$. Note that this will imply that $P^{(0)}_{i,j}$, the fraction of the noisy samples lying in $A_{i,j}$, is within $2\eps$ of $X(A_{i,j})$.

Additionally, for each $m$ we sample enough samples from $H_m$ to compute an approximation $P^{(m)}_{i,j}$ to $H_m(A_{i,j})$ to enough accuracy so that with high probability $|P^{(m)}_{i,j}-H_m(A_{i,j})| \leq \eps$ for all $i,j,m$.

Our algorithm then returns any $m$ so that $|P^{(m)}_{i,j}-P^{(0)}_{i,j}|\leq 4\eps$ for all $i,j$. This will necessarily exist because
$$
|P^{(m)}_{i,j}-P^{(0)}_{i,j}| \leq |P^{(m)}_{i,j}-H_m(A_{i,j})| + |P^{(0)}_{i,j}-X(A_{i,j})| + \dtv(H_m,X),
$$
which is at most $4\eps$ for any $m$ for which $\dtv(H_m,X)\leq \eps$.

However, such an $m$ will be sufficient this is because
$$
|P^{(m)}_{i,j}-P^{(0)}_{i,j}| \geq |X(A_{i,j})-H_m(A_{i,j})| - |P^{(m)}_{i,j}-H_m(A_{i,j})| - |P^{(0)}_{i,j}-X(A_{i,j})|.
$$
On the other hand, we also have that
$$
|X(A_{i,j})-H_m(A_{i,j})| \geq |H_{k}(A_{i,j}) - H_m(A_{i,j})| - \dtv(H_k,X).
$$
If we take $H_k$ so that $\dtv(H_k,X)\leq \eps$ and take $i=k$ and $j=m$, we have that $|H_{k}(A_{i,j}) - H_m(A_{i,j})|= \dtv(H_k,H_m)$, and combining this with the above, we have
$$
|P^{(m)}_{k,m}-P^{(0)}_{k,m}| \geq \dtv(H_m,H_k)-3\eps \geq \dtv(X,H_m) - 4\eps.
$$
Therefore, given any $m$ with $|P^{(m)}_{i,j}-P^{(0)}_{i,j}|\leq 4\eps$ for all $i,j$ will have $\dtv(X,H_m)\leq 8\eps$.

This completes our proof.
\end{proof}

\section{Separated Gaussians}\label{separatedSec}

We note that if $G_1$ and $G_2$ are separated in variational distance that our problem is already solved by work of \cite{clustering2}.
\begin{theorem}\label{separatedCaseTheorem}
If $G_1$ and $G_2$ are Gaussians with $\dtv(G_1,G_2)>1-\poly(\eps)$, then there is a polynomial time algorithm that given access to $\eps$-noisy samples from $X=(G_1+G_2)/2$ learns $X$ to error $\poly(\eps)$.
\end{theorem}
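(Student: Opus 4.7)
The plan is to reduce this problem to two already-established results: the clustering algorithm of \cite{clustering2} for mixtures whose components are well-separated in total variation distance, and the single-Gaussian robust learner of \cite{dkk}. The intuition is that when $\dtv(G_1, G_2) \geq 1 - \poly(\eps)$, there exist nearly disjoint sets $A_1, A_2$ on which $G_1$ and $G_2$ respectively concentrate, so with high probability each (uncorrupted) sample can be unambiguously traced back to the component that produced it.

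First I would feed the given $\eps$-noisy samples from $X = (G_1+G_2)/2$ into the clustering algorithm of \cite{clustering2}. Under the TV-separation hypothesis this outputs, with high probability, a partition of the samples $S = S_1 \sqcup S_2$ such that the fraction of samples in $S_i$ that were originally drawn from $G_{3-i}$ (or that were adversarially injected) is at most $\poly(\eps)$. In other words, each $S_i$ may be viewed as a $\poly(\eps)$-noisy sample set from $G_i$, with the effective noise rate absorbing both the original $\eps$-corruption and any misclassification error from the clustering step.

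Next, I would apply the single-Gaussian robust learner of \cite{dkk} separately to $S_1$ and $S_2$, obtaining estimates $\hat G_1$ and $\hat G_2$ satisfying $\dtv(G_i,\hat G_i) = \poly(\eps)$. Finally I would output $\hat X := (\hat G_1+\hat G_2)/2$. By the convexity of total variation distance,
$$
\dtv(X,\hat X) \ \leq\ \tfrac{1}{2}\,\dtv(G_1,\hat G_1) + \tfrac{1}{2}\,\dtv(G_2,\hat G_2) \ =\ \poly(\eps),
$$
as required. Polynomial running time is immediate because each of the two subroutines runs in time $\poly(d/\eps)$.

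The main conceptual issue, which is really the content of \cite{clustering2}, is showing that the clustering step actually succeeds robustly: the adversary could try to confuse the clusterer by placing corrupted samples precisely on the ``boundary'' between $G_1$ and $G_2$. That obstacle is handled in \cite{clustering2} and we simply invoke it. The only bookkeeping required here is to verify that the polynomial-in-$\eps$ error rates compose properly: after clustering we pay a $\poly(\eps)$ factor, after single-Gaussian learning we pay another $\poly(\eps)$ factor, and the final TV error in the mixture is bounded by the maximum of the two. Since all of these losses are polynomial and Theorem~\ref{basicCovLearnerTheorem} together with the covariance learner of Theorem~\ref{CovarianceLearnerTheorem} tolerate any sufficiently small noise rate, no real difficulty arises, and the theorem follows as essentially a packaging statement.
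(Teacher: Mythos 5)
Your proposal takes essentially the same route as the paper: the paper proves Theorem \ref{separatedCaseTheorem} by directly invoking \cite{clustering2}, whose separation hypothesis $\dtv(G_1,G_2)>1-\poly(\eps)$ is precisely the one assumed here, and offers no further argument. Your cluster-then-learn decomposition (clustering via \cite{clustering2}, then the single-Gaussian learner of \cite{dkk} on each cluster, then averaging the TV errors by convexity) merely unpacks what that cited result already delivers, so it is the same approach with a bit more bookkeeping spelled out.
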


We can henceforth assume that $G_1$ and $G_2$ have total variation distance at most $1-\delta$ with $\delta= \eps^c$ for some small positive constant $c$. We would like to know what this entails.
\begin{lemma}\label{separationLem}
Suppose that $G_1$ and $G_2$ are Gaussians with total variation distance at most $1-\delta$. Let $X=(G_1+G_2)/2$ have covariance $\Sigma$. Then we have that:
\begin{enumerate}
\item $\cov(G_1),\cov(G_2) \gg \delta^2 \Sigma$
\item $\|\Sigma^{-1/2} (\cov(G_1)-\cov(G_2)) \Sigma^{-1/2}\|_F = O(\log(1/\delta))$
\end{enumerate}
\end{lemma}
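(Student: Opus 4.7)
The plan is to convert the constraint $\dtv(G_1,G_2)\le 1-\delta$ into a single algebraic inequality via the Hellinger affinity, and then read off both conclusions from that inequality. Since $\sqrt{p_1 p_2}\ge \min(p_1,p_2)$ pointwise,
\[
\mathrm{Aff}(G_1,G_2):=\int\sqrt{p_1 p_2}\,dx\;\ge\;\int\min(p_1,p_2)\,dx\;=\;1-\dtv(G_1,G_2)\;\ge\;\delta.
\]
Substituting the Gaussian closed form
\[
\mathrm{Aff}(G_1,G_2)=\frac{\det(\Sigma_1)^{1/4}\det(\Sigma_2)^{1/4}}{\det(\bar\Sigma)^{1/2}}\exp\!\Big(-\tfrac18(\mu_1-\mu_2)^T\bar\Sigma^{-1}(\mu_1-\mu_2)\Big),\qquad \bar\Sigma:=\tfrac12(\Sigma_1+\Sigma_2),
\]
taking $-\log$ of both sides, and diagonalizing $A:=\Sigma_1^{-1/2}\Sigma_2\Sigma_1^{-1/2}$ with eigenvalues $\gamma_1,\dots,\gamma_d>0$ to rewrite the determinant factor as $\tfrac14\sum_i\log((1+\gamma_i)^2/(4\gamma_i))$, produces the single constraint
\[
\sum_{i=1}^d\log\!\Big(\tfrac{(1+\gamma_i)^2}{4\gamma_i}\Big)+(\mu_1-\mu_2)^T\bar\Sigma^{-1}(\mu_1-\mu_2)\;\lesssim\;\log(1/\delta),
\]
in which every summand on the left is nonnegative (by AM--GM for each log term).

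For the first bullet I read off that each log-summand is individually $\le O(\log(1/\delta))$, so $(1+\gamma_i)^2/(4\gamma_i)\le \poly(1/\delta)$, forcing $\gamma_i\in[\poly(\delta),\poly(1/\delta)]$ for every $i$. Hence $\Sigma_2\preceq \poly(1/\delta)\Sigma_1$ and vice versa. The mean part of the constraint upgrades via the rank-one bound $ww^T\preceq (w^T S^{-1}w)S$ (with $S=\bar\Sigma$) to $(\mu_1-\mu_2)(\mu_1-\mu_2)^T\preceq O(\log(1/\delta))\bar\Sigma\preceq \poly(1/\delta)\Sigma_1$. Plugging into $\Sigma=\bar\Sigma+\tfrac14(\mu_1-\mu_2)(\mu_1-\mu_2)^T$ yields $\Sigma\preceq \poly(1/\delta)\cdot\Sigma_1$, i.e.\ $\cov(G_1)\succeq \poly(\delta)\cdot \Sigma$, and symmetrically for $G_2$.

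For the second bullet I change basis so that $\Sigma_1=I$ and $\Sigma_2=\mathrm{diag}(\gamma_i)$; then $\bar\Sigma=\mathrm{diag}((1+\gamma_i)/2)$, $\Sigma_1-\Sigma_2=\mathrm{diag}(1-\gamma_i)$, and a direct computation gives
\[
\|\bar\Sigma^{-1/2}(\Sigma_1-\Sigma_2)\bar\Sigma^{-1/2}\|_F^2\;=\;\sum_{i=1}^d\frac{4(1-\gamma_i)^2}{(1+\gamma_i)^2}.
\]
The crux is the pointwise scalar inequality $4(1-\gamma)^2/(1+\gamma)^2\le 4\log((1+\gamma)^2/(4\gamma))$ for all $\gamma>0$: near $\gamma=1$ both sides are $\approx(\gamma-1)^2$, while as $\gamma\to 0^+$ or $\gamma\to\infty$ the left side stays bounded by $4$ and the right side diverges. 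Summing over $i$ and invoking the affinity constraint gives $\|\bar\Sigma^{-1/2}(\Sigma_1-\Sigma_2)\bar\Sigma^{-1/2}\|_F=O(\sqrt{\log(1/\delta)})$. To transfer to the stated norm I use $\Sigma\succeq\bar\Sigma$ (immediate from the identity above) so that $D:=\bar\Sigma^{-1}-\Sigma^{-1}\succeq 0$; writing $M=\Sigma_1-\Sigma_2$ and expanding $\tr(M\bar\Sigma^{-1}M\bar\Sigma^{-1})$ as $\tr(M(\Sigma^{-1}+D)M(\Sigma^{-1}+D))$, each cross term $\tr(M\Sigma^{-1}MD)$ and $\tr(MDMD)$ is the trace of a product of two PSD matrices (using that $M\Sigma^{-1}M$ and $MDM$ are PSD for symmetric $M$) and hence nonnegative, so $\|\Sigma^{-1/2}M\Sigma^{-1/2}\|_F\le\|\bar\Sigma^{-1/2}M\bar\Sigma^{-1/2}\|_F=O(\sqrt{\log(1/\delta)})=O(\log(1/\delta))$.

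The principal obstacle is the pointwise scalar inequality $4(1-\gamma)^2/(1+\gamma)^2\le 4\log((1+\gamma)^2/(4\gamma))$: it is what forces the per-coordinate Frobenius penalty to be dominated by the per-coordinate log-affinity penalty uniformly in $\gamma$, including the delicate regime $\gamma\approx 1$ where both sides vanish quadratically. Once this inequality is verified, both items of the lemma fall out as purely algebraic consequences of the single affinity-derived constraint.
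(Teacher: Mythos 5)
Your route is genuinely different from the paper's: you derive everything from the single Bhattacharyya-affinity constraint $\mathrm{Aff}(G_1,G_2)\ge 1-\dtv(G_1,G_2)\ge\delta$, whereas the paper proves item 1 by contradiction via a one-dimensional projection onto a small-eigenvalue direction of $\cov(G_1)$, and item 2 by citing the quantity $h_\Sigma$ from the statistical-query lower bound paper together with the scalar bound $\min(|\log x|,|\log x|^2)\ge\Omega(((1-x)/(1+x))^2)$. For item 2 your argument is correct, self-contained, and in fact cleaner: after whitening, the per-eigenvalue inequality you flag as the principal obstacle is immediate, since with $t=(1-\gamma)^2/(1+\gamma)^2$ one has $(1+\gamma)^2/(4\gamma)=1/(1-t)$ and the claim becomes $t\le-\log(1-t)$; your transfer from the $\bar\Sigma$-weighted norm to the $\Sigma$-weighted norm via $\Sigma=\bar\Sigma+\tfrac14(\mu_1-\mu_2)(\mu_1-\mu_2)^T\succeq\bar\Sigma$ and the PSD expansion of the trace is also fine, and you even get $O(\sqrt{\log(1/\delta)})$, stronger than stated.

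The gap is in item 1: your argument does not give the stated exponent. From $\mathrm{Aff}\ge\delta$ the per-eigenvalue constraint is $\tfrac14\log\bigl((1+\gamma_i)^2/(4\gamma_i)\bigr)\le\log(1/\delta)$, which only forces $\gamma_i\gtrsim\delta^4$, and after folding in the mean term you conclude $\cov(G_i)\succeq\poly(\delta)\,\Sigma$ with exponent roughly $4$ (or worse), not $\delta^2\Sigma$. This loss is intrinsic to the relaxation $\mathrm{Aff}\ge 1-\dtv$: for $G_1=N(0,\sigma^2)$, $G_2=N(0,1)$ one has $1-\dtv=\tilde\Theta(\sigma)$ while $\mathrm{Aff}=\Theta(\sqrt{\sigma})$, so the affinity constraint is quadratically weaker than the overlap constraint exactly in the regime that governs item 1; the true scaling is $\cov(G_i)\gtrsim\delta^2\Sigma$ (up to logs), which is why the paper works directly with the overlap via a one-dimensional projection rather than through the affinity. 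For the paper's downstream uses (where $\delta$ is an arbitrary small $\poly(\eps)$ and every later bound is only $\poly(\eps)$, $\poly(\delta)$) your weaker exponent would suffice after renaming constants, but as a proof of the lemma as stated, item 1 is not established; to get $\delta^2$ you would need to argue from $\int\min(p_1,p_2)\ge\delta$ itself (e.g.\ the paper's projection argument) rather than from the affinity.
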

\begin{proof}
We note that this statement is invariant under affine transformations, by applying such a transformation, we can assume that $\Sigma=I$. We will proceed by contradiction. In particular, we will show that if either of the above are violated, then $\dtv(G_1,G_2)> 1-\delta$. We also assume throughout that $\delta$ is sufficiently small.

For the first condition, assume that (without loss of generality) $\cov(G_1)$ has an eigenvalue less than $c\delta^3$ for a sufficiently small constant $c$. In particular this means that there is a unit vector $v$ so that $\var(v\cdot G_1) \leq c\delta^2.$ We will show that $\dtv(G_1,G_2)\geq \dtv(v\cdot G_1,v\cdot G_2) \geq 1-\delta$. We note that $v\cdot G_1$ and $v\cdot G_2$ are one dimensional Gaussians whose mixture has unit variance. Let $\Delta$ be the distance between the means and let $\sigma$ be the variance of $v\cdot G_2$. We note that $\var(v\cdot X) = \sigma^2/2 + \Delta^2/4 + \var(v\cdot G_1)/2$. In particular this implies that either $\sigma \gg 1$ or $|\Delta|\gg 1.$ In fact if $\sigma \geq \delta$, it is easy to see that $\dtv(v\cdot G_1,v\cdot G_2) \gg 1-O(\sqrt{c}\delta)$ since the standard deviations of these Gaussians differ by a factor of at least $\sqrt{c}\delta$. Otherwise, if $\sigma < \delta$ then $|\Delta| \gg 1$ and the components are separated by $\Omega(1/\sqrt{\delta})$ standard deviations, which implies that the total variational distance is at least $1-\delta$.

For the second condition, we will use some results from \cite{SQ}. For Gaussians $G_1,G_2$ they define $h_\Sigma(G_1,G_2)$ where $\dtv(G_1,G_2) \geq 1-\exp(-\Omega(h_\Sigma(G_1,G_2))).$ They also show that if $A=\cov(G_1), B=\cov(G_2)$ we have that
$$
h_\Sigma(G_1,G_2) = \Theta\left(\sum_{\substack{\lambda \textrm{ eigenvalues of }B^{-1/2}AB^{-1/2}}} \min(|\log(\lambda)|,|\log(\lambda)|^2) \right).
$$
We note that if $\dtv(G_1,G_2)\leq 1-\delta$, then $h_\Sigma(G_1,G_2) = O(\log(1/\delta)).$

Note that if we apply an orthogonal change of variables so that $A$ and $B$ are simultaneously diagonalized, that the eigenvalues of $B^{-1/2}AB^{-1/2}$ are just the ratios of the eigenvalues of $A$ with the corresponding eigenvalues of $B$. We note that since $\min(|\log(x)|,|\log(x)^2|) \geq \Omega(((1-x)/(1+x))^2)$, we have that
\begin{align*}
h_\Sigma(G_1,G_2) & \geq \Omega\left( \sum_{\substack{\lambda \textrm{ eigenvalues of }(A+B)^{-1/2}(A-B)(A+B)^{-1/2}}} \lambda^2 \right) \\ & = \Omega(\|(A+B)^{-1/2}(A-B)(A+B)^{-1/2}\|_F^2).
\end{align*}
However, if $I=\cov(X) \geq (\cov(G_1)+\cov(G_2))/2$. We have that $A+B \leq 2I$, and so the above is at least $\Omega(\|\cov(G_1)-\cov(G_2)\|_F^2)$, and so $\|\cov(G_1)-\cov(G_2)\|_F$ must be $O(\log(1/\delta))$.
\end{proof}

\section{Covariance Approximation for Non-Separated Mixtures}\label{reductionSec}

Our next result shows allows us to robustly estimate the covariance of $X$ assuming that the component Gaussians are not too far apart.

\begin{proposition}\label{mixtureCovarianceExtimationProp}
Let $X=(G_1+G_2)/2$ be a mixture of Gaussians with $\dtv(G_1,G_2)\leq 1-\delta$. Let $\Sigma = \cov(X)$. There exists a polynomial time algorithm that given $\eps$-noisy samples from $X$ for $\eps$ less than a sufficiently small constant returns a hypothesis $\hat{\Sigma}$ so that
$$
\| \Sigma^{-1/2}(\hat{\Sigma}-\Sigma)\Sigma^{-1/2}\|_F \leq O(\log(1/\delta)\sqrt{\eps}).
$$
\end{proposition}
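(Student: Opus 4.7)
The plan is to reduce to applying Theorem \ref{CovarianceLearnerTheorem} to the matrix-valued random variable $YY^T$ where $Y = X - \mu$ is a (robustly) centered version of $X$. Centering can be handled as a preprocessing step: a rough estimate of $\E[X]$ can be obtained by running the mean estimator of Theorem \ref{basicCovLearnerTheorem} after a pair-difference trick (replacing each sample by $(x_i - x_{i+n/2})/\sqrt 2$, which yields a mean-$0$ distribution with covariance $\Sigma$), and then iterating to tighten the estimate; alternatively, by first applying Theorem \ref{CovarianceLearnerTheorem} to $XX^T$ uncentered to get a rough scale. After this reduction, we may assume we are estimating $\cov(Y)=\Sigma$ for a centered $Y$, equivalently $\E[YY^T]=\Sigma$, via Theorem \ref{CovarianceLearnerTheorem} applied to $YY^T$.

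The main technical step is verifying the variance hypothesis of Theorem \ref{CovarianceLearnerTheorem}: for every symmetric matrix $A$,
$$
\var(Y^T A Y) \;\leq\; O(\log^2(1/\delta))\,\|\Sigma^{1/2} A\Sigma^{1/2}\|_F^2 .
$$
Since $Y$ is itself an equal-weight mixture of two Gaussians $G_i=N(\mu_i,\Sigma_i)$ with $\mu_2=-\mu_1$, the law of total variance gives
$$
\var(Y^T A Y) \;=\; \tfrac12\bigl(\var_1(Y^T A Y)+\var_2(Y^T A Y)\bigr)+\tfrac14\bigl(\E_1[Y^T A Y]-\E_2[Y^T A Y]\bigr)^2 .
$$
The standard Gaussian identity yields $\var_i(Y^T A Y) = 2\,\tr((A\Sigma_i)^2) + 4\,\mu_i^T A\Sigma_i A\mu_i$, and since $\E_i[Y^T A Y] = \tr(A\Sigma_i) + \mu_i^T A\mu_i$ with $\mu_1^T A\mu_1=\mu_2^T A\mu_2$, the cross term collapses to $(\tr(A(\Sigma_1-\Sigma_2)))^2$.

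Each of these pieces is now controllable via Lemma \ref{separationLem} together with the mixture identity $\Sigma = \tfrac12(\Sigma_1+\Sigma_2)+\mu_1\mu_1^T$, which immediately yields both $\Sigma_i\preceq 2\Sigma$ and $\mu_i\mu_i^T\preceq\Sigma$. The first bound gives $\tr((A\Sigma_i)^2) \leq 4\|\Sigma^{1/2}A\Sigma^{1/2}\|_F^2$. Writing $\mu_i=\Sigma^{1/2}v_i$ with $\|v_i\|\leq 1$, the second bound gives
$$
\mu_i^T A \Sigma_i A\mu_i \;\leq\; 2\,v_i^T(\Sigma^{1/2}A\Sigma^{1/2})^2 v_i \;\leq\; 2\,\|\Sigma^{1/2}A\Sigma^{1/2}\|_F^2.
$$
For the between-component term, Cauchy--Schwarz in the Frobenius inner product gives
$$
(\tr(A(\Sigma_1-\Sigma_2)))^2 \;\leq\; \|\Sigma^{1/2}A\Sigma^{1/2}\|_F^2\cdot\|\Sigma^{-1/2}(\Sigma_1-\Sigma_2)\Sigma^{-1/2}\|_F^2,
$$
and the second factor is $O(\log^2(1/\delta))$ by Lemma \ref{separationLem}(2). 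Combining these bounds gives the claimed variance inequality with $\sigma=O(\log(1/\delta))$, and Theorem \ref{CovarianceLearnerTheorem} then produces $\hat\Sigma$ with $\|\Sigma^{-1/2}(\hat\Sigma-\Sigma)\Sigma^{-1/2}\|_F=O(\log(1/\delta)\sqrt\eps)$.

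The main obstacle I anticipate is the centering/scaling reduction: robust mean estimation via Theorem \ref{basicCovLearnerTheorem} nominally requires a prior covariance bound, which we do not know a priori. I expect this is handled by bootstrapping (a rough application of Theorem \ref{CovarianceLearnerTheorem} to the uncentered second-moment matrix $XX^T$, whose variance hypothesis can be verified by an analogous but messier computation, gives a usable scale; then centering; then a refined application). The actual variance estimate above is cleaner and essentially forced by the structure of the mixture together with Lemma \ref{separationLem}.
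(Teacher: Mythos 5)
Your central computation is correct and is essentially the same as the paper's: reduce to Theorem \ref{CovarianceLearnerTheorem} applied to the PSD-matrix-valued variable $YY^T$ for a mean-zero $Y$, and verify the variance hypothesis with $\sigma=O(\log(1/\delta))$ using Lemma \ref{separationLem}. Your verification via the law of total variance and the Gaussian quadratic-form identities (within-component terms bounded by $\Sigma_i\preceq 2\Sigma$ and $\mu_i\mu_i^T\preceq\Sigma$, between-component term bounded by Cauchy--Schwarz against $\|\Sigma^{-1/2}(\Sigma_1-\Sigma_2)\Sigma^{-1/2}\|_F=O(\log(1/\delta))$) is a clean scalar version of the tensor expansion the paper carries out, and it is sound.

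The genuine gap is the reduction to a mean-zero $Y$, which you flag but do not actually resolve. As written, your first proposed fix does not work: the pair-differenced samples $(x_i-x_{i+n/2})/\sqrt 2$ have mean zero by construction, so running a mean estimator on them tells you nothing about $\E[X]$; moreover Theorem \ref{basicCovLearnerTheorem} itself needs a prior covariance bound, which is exactly what you are trying to estimate. Your second fix (bootstrap via $XX^T$) is only sketched, and it has a real subtlety: an uncentered application estimates $S=\Sigma+\bar\mu\bar\mu^T$, and a mean estimate accurate in the $S^{-1/2}$ metric can be badly inaccurate in the $\Sigma^{-1/2}$ metric when $|\Sigma^{-1/2}\bar\mu|$ is large, so the induced bias $(\E[X]-\hat\mu)(\E[X]-\hat\mu)^T$ is not automatically negligible without a genuine iteration argument (and the approximate centering also breaks the exact symmetry $\mu_2=-\mu_1$ you use to cancel the $\mu_i^TA\mu_i$ terms). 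The paper sidesteps all of this: it never centers. It applies the covariance learner directly to $Z=YY^T$ where $Y=X-X'$ is the difference of two independent (paired) samples, so $\E[Z]=2\Sigma$ exactly and no mean estimation is needed; the price is that $Y$ is an equal-weight mixture of four Gaussians $G_i-G_j'$, and the variance hypothesis is checked componentwise (each component has covariance $O(\Sigma)$ and mean of size $O(\Sigma^{1/2})$) plus a between-component means term of size $O(\log(1/\delta))$, exactly parallel to your bounds. If you replace your centering preprocessing with this pair-difference formulation, your argument becomes complete.
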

\begin{proof}
Let $Y=X-X'$ be the difference of independent copies of $X$. Note that our algorithm has sample access to a $2\eps$-noisy samples of $Y$ by subtracting pairs of $\eps$-noisy samples from $X$. Let $Z= YY^T$. Note that $\cov(X)$ is proportional to $\E[Z]$. It therefore suffices to show that $Z$ satisfies the hypotheses of Theorem \ref{basicCovLearnerTheorem}.

Let $\Sigma=\cov(X)$, since this problem is invariant under linear change of variables, so we assume for convenience that $\Sigma=I$. Note that this problem is unaffected by translating $X$ to have mean $0$. Let $G_i = N(\mu_i,\Sigma_i)$ where $\mu_1 = \mu = -\mu_2$. We note that $\mu\mu^T \ll I$. We also note that by Lemma \ref{separationLem} that $I \gg \Sigma_i \gg \delta^2 I$ and $\|\Sigma_1-\Sigma_2\|_F = O(\log(1/\delta)).$

We note that $Y$ is a mixture of $G_1-G_1'$, $G_1-G_2'$, $G_2-G_1'$ and $G_2-G_2'$ (where the primed versions of the variables are independent copies). Hence $Z$ is a mixture of $Z_i=D_iD_i^T$ where each $D_i$ is one of these four distributions. Since the covariance of a mixture of distributions is the mixture of the covariances plus the covariance of the distribution over the component means, we need to show that:
\begin{enumerate}
\item For each of these distributions $Z_i$, and every matrix $A$ we have that $\var(\tr(AZ_i)) = O(\|A\|_F^2)$.
\item For any two of these distributions $Z_i$ and $Z_j$, we have that $\|\E[Z_i-Z_j]\|_F^2 = O(\log^2(1/\delta)).$
\end{enumerate}
To show this we note that each $D_i$ is a Gaussian $N(\mu^\ast_i,\Sigma^\ast_i)$ with $\Sigma^\ast_i = O(I)$, $|\mu^\ast_i|=O(1)$ and $\|\Sigma^\ast_i-\Sigma^\ast_j\|_F =O(\log(1/\delta))$. The first claim above follows by noting that of $D=N(\mu^\ast,\Sigma^\ast)$ that $\var(D\otimes D)$ is
\begin{align*}
& \E[D^{\otimes 4}] - \E[D^{\otimes 2}]^{\otimes 2}\\
= \ \ & \Sigma^\ast_{i_1i_2}\Sigma^\ast_{i_3i_4} +  \Sigma^\ast_{i_1i_3}\Sigma^\ast_{i_2i_4} + \Sigma^\ast_{i_2i_3}\Sigma^\ast_{i_1i_4}
\\ + & \Sigma^\ast_{i_1i_2}\mu^\ast_{i_3}\mu^\ast_{i_4} + \Sigma^\ast_{i_1i_3}\mu^\ast_{i_2}\mu^\ast_{i_4} + \Sigma^\ast_{i_1i_4}\mu^\ast_{i_2}\mu^\ast_{i_3} + \Sigma^\ast_{i_2i_3}\mu^\ast_{i_1}\mu^\ast_{i_4} + \Sigma^\ast_{i_2i_4}\mu^\ast_{i_1}\mu^\ast_{i_3} + \Sigma^\ast_{i_3i_4}\mu^\ast_{i_1}\mu^\ast_{i_2}\\
+ & \mu^\ast_{i_1}\mu^\ast_{i_2}\mu^\ast_{i_3}\mu^\ast_{i_4} - \Sigma^\ast_{i_1i_2}\Sigma^\ast_{i_3i_4} - \Sigma^\ast_{i_1i_2}\mu^\ast_{i_3}\mu^\ast_{i_4} - \Sigma^\ast_{i_3i_4}\mu^\ast_{i_1}\mu^\ast_{i_2} - \mu^\ast_{i_1}\mu^\ast_{i_2}\mu^\ast_{i_3}\mu^\ast_{i_4} \\
= \ \ & \mathrm{Sym}_{\{\{1,2\},\{3,4\}\}}(2\Sigma^\ast_{i_1i_3}\Sigma^\ast_{i_2i_4}+4\Sigma^\ast_{i_1i_2}\mu^\ast_{i_3}\mu^\ast_{i_4}).
\end{align*}
It is enough to show that for each of these terms $T$ in the above that $A_{i_1i_2}T_{i_1i_2i_3i_4}A_{i_3i_4} = O(\|A\|_F^2).$ For $T=\Sigma^\ast_{i_1i_2}\mu^\ast_{i_3}\mu^\ast_{i_4}$, this is $(A\mu^\ast)^T \Sigma^\ast (A\mu^\ast)$, and it follows since $|A\mu^\ast| \leq |\mu^\ast|\|A\|_F$ and $\|\Sigma^\ast\|_2 = O(1)$. For $T=\Sigma^\ast_{i_1i_3}\Sigma^\ast_{i_2i_4}$ this follows from $A_{i_1i_2}T_{i_1i_2i_3i_4}A_{i_3i_4} = \|(\Sigma^\ast)^{1/2} A(\Sigma^\ast)^{1/2} \|_F^2 = O(\|A\|_F^2)$ since $\Sigma^\ast = O(I)$.

For the second note, we have that $\E[Z_i]=\Sigma_i^\ast +\mu_i^\ast \otimes \mu_i^\ast$. The result follows from $|\mu_i^\ast|=O(1)$ and $\|\Sigma_i^\ast -\Sigma_j^\ast\|_F^2 = O(\log^2(1/\delta)).$

This completes our proof.
\end{proof}

This allows us to learn an approximation to the covariance of $X$. By applying a linear transformation, we can make this covariance approximately the identity. Thus, using Theorem \ref{basicCovLearnerTheorem}, we can approximate the mean of $X$ to translate it into standard form. Using this, we can reduce to the case where $X$ has mean $0$ and covariance $I$.
\begin{proposition}\label{standardFormReductionProp}
Let $X=(G_1+G_2)/2$ where $G_i$ are $d$-dimensional Gaussians with $\dtv(G_1,G_2)\leq 1-\delta$. There exists an algorithm that given $\eps$-noisy samples from $X$ with $\eps < \delta^3$ runs in polynomial time and with high probability returns an invertible affine transformation $L$ so that $L(X)$ is $O(\sqrt{\eps}\log(1/\delta)/\delta^2)$-close in total variation distance to a distribution $X'=(G_1'+G_2')/2$ where $G_1',G_2'$ are Gaussians with $\dtv(G_1',G_2')\leq 1-\delta$ and $\E[X']=0$ and $\cov(X')=I$.
\end{proposition}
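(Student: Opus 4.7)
The plan is to build $L$ out of two robust estimators applied in sequence: first use Proposition~\ref{mixtureCovarianceExtimationProp} to get a relative-Frobenius approximation $\hat\Sigma$ of $\Sigma=\cov(X)$, then whiten the samples by $\hat\Sigma^{-1/2}$ and use Theorem~\ref{basicCovLearnerTheorem} to estimate the mean in the whitened coordinates, and finally verify that the resulting affine map sends $X$ close to a suitably normalized mixture. The target $X'$ will not be produced by the algorithm; it suffices to exhibit it as the image of $X$ under the exact whitening map.

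Concretely, I would first run the algorithm of Proposition~\ref{mixtureCovarianceExtimationProp} to get $\hat\Sigma$ with $\|\Sigma^{-1/2}(\hat\Sigma-\Sigma)\Sigma^{-1/2}\|_F\leq \beta$ where $\beta=O(\log(1/\delta)\sqrt{\eps})$. Because $\eps<\delta^3$, $\beta$ is safely less than $1/2$, so all eigenvalues of $\Sigma^{-1/2}\hat\Sigma\Sigma^{-1/2}$ lie in $[1/2,2]$ and a routine perturbation-of-matrix-square-root argument gives $\|\Sigma^{1/2}\hat\Sigma^{-1/2}\Sigma^{1/2}-I\|_F$ and $\|\hat\Sigma^{-1/2}\Sigma\hat\Sigma^{-1/2}-I\|_F$ both $O(\beta)$. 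In particular, after applying $\hat\Sigma^{-1/2}$ to each noisy sample I get $\eps$-noisy samples of $Y:=\hat\Sigma^{-1/2}X$ whose covariance is at most $2I$, and I can invoke Theorem~\ref{basicCovLearnerTheorem} (up to a constant rescaling) to obtain $\hat\nu$ with $|\hat\nu-\hat\Sigma^{-1/2}\mu|=O(\sqrt{\eps})$. Set $L(x):=\hat\Sigma^{-1/2}x-\hat\nu$.

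For the reference distribution, define the exact whitening $L_0(x):=\Sigma^{-1/2}(x-\mu)$ and take $X':=L_0(X)=(G_1'+G_2')/2$ with $G_i':=L_0(G_i)$. Then $\E[X']=0$, $\cov(X')=I$, and $\dtv(G_1',G_2')=\dtv(G_1,G_2)\leq 1-\delta$ by invariance of total variation distance under invertible affine maps. The remaining task is to bound $\dtv(L(X),X')$. Since both sides are equal-weight mixtures of two Gaussians, it suffices by the triangle inequality to bound $\dtv(L(G_i),G_i')$ for each $i$ using Fact~\ref{GaussianTVFact}. Passing to the standardized frame where $\Sigma=I$ and $\mu=0$ (permissible since all quantities are affine-equivariant), Lemma~\ref{separationLem} tells us that $\Sigma_i\succeq \Omega(\delta^2) I$ and $|\mu_i|\leq 1$, while $L(G_i)=N(\hat\Sigma^{-1/2}\mu_i-\hat\nu,\,\hat\Sigma^{-1/2}\Sigma_i\hat\Sigma^{-1/2})$ and $G_i'=N(\mu_i,\Sigma_i)$. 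Writing $\hat\Sigma^{-1/2}=I+F$ with $\|F\|_F,\|F\|_{\mathrm{op}}=O(\beta)$, the mean shift becomes $F\mu_i-\hat\nu$ with Euclidean norm $O(\beta)$, and the covariance shift becomes $F\Sigma_i+\Sigma_iF+F\Sigma_iF$; taking Frobenius norms after conjugation by $\Sigma_i^{-1/2}$ and using $\|\Sigma_i^{-1/2}\|_{\mathrm{op}}=O(\delta^{-1})$ gives a term of the form $O(\beta/\delta^{O(1)})$ from each, and plugging into Fact~\ref{GaussianTVFact} produces a bound of the required shape $O(\sqrt{\eps}\log(1/\delta)/\delta^2)$.

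The main obstacle is the careful bookkeeping in this last step. Fact~\ref{GaussianTVFact} measures errors in the metric $\Sigma_i^{-1}$ attached to the individual component, which can blow up by $\delta^{-2}$, whereas Proposition~\ref{mixtureCovarianceExtimationProp} controls $\hat\Sigma-\Sigma$ only in the metric of the mixture $\Sigma^{-1}$, and $\hat\Sigma^{-1/2}$ need not commute with $\Sigma_i$. Squeezing the three pieces $F\Sigma_i$, $\Sigma_iF$, $F\Sigma_iF$ through $\Sigma_i^{-1/2}(\cdot)\Sigma_i^{-1/2}$ without losing an additional $\sqrt d$ factor requires using $\|AB\|_F\leq \|A\|_{\mathrm{op}}\|B\|_F$ in the right order, splitting the conjugation as $\Sigma_i^{-1/2}F\Sigma_i^{1/2}$ and its transpose rather than $\Sigma_i^{-1/2}F\Sigma_i^{-1/2}\cdot\Sigma_i$. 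The mean part is comparatively easy once one knows $|\mu_i|\leq 1$, which itself follows from the fact that after standardizing, $\mu_i\mu_i^T=I-(\Sigma_1+\Sigma_2)/2\preceq I$.
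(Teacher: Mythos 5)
Your overall architecture is the same as the paper's: estimate $\Sigma$ via Proposition~\ref{mixtureCovarianceExtimationProp}, whiten by $\hat\Sigma^{-1/2}$, estimate the mean via Theorem~\ref{basicCovLearnerTheorem}, set $L(x)=\hat\Sigma^{-1/2}x-\hat\nu$, and compare component-by-component using Fact~\ref{GaussianTVFact} together with $\cov(G_i)\succeq \Omega(\delta^2)\Sigma$ from Lemma~\ref{separationLem}. The difference, and the gap, lies in your choice of reference $X'=\Sigma^{-1/2}(X-\mu)$. With that choice the comparison map $L\circ L_0^{-1}$ has linear part $W=\hat\Sigma^{-1/2}\Sigma^{1/2}$, and your entire last paragraph rests on the claim that (in the standardized frame) $W=I+F$ with $\|F\|_F=O(\beta)$. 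This does not follow by routine manipulation from $\|\Sigma^{-1/2}(\hat\Sigma-\Sigma)\Sigma^{-1/2}\|_F\le\beta$: conjugation does not commute with taking square roots, and what you get immediately is only that $W^TW=\Sigma^{1/2}\hat\Sigma^{-1}\Sigma^{1/2}$ is $O(\beta)$-close to $I$, i.e.\ the positive polar factor of $W$ is near $I$; that $W$ is near $I$ rather than near some other orthogonal matrix is an additional statement (true, but it needs a genuine argument, e.g.\ the integral/derivative representation of the matrix square root at an arbitrarily ill-conditioned $\Sigma$, with care not to pick up a $\sqrt d$ factor). Relatedly, your stated intermediate bound $\|\Sigma^{1/2}\hat\Sigma^{-1/2}\Sigma^{1/2}-I\|_F=O(\beta)$ cannot be what you mean: that matrix is close to $\Sigma^{1/2}$, not to $I$; presumably you intended $\Sigma^{1/2}\hat\Sigma^{-1}\Sigma^{1/2}$ (easy, but insufficient) or $\hat\Sigma^{-1/2}\Sigma^{1/2}$ (exactly the nontrivial one). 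The rest of your bookkeeping ($|\mu_i|\le 1$, $\Sigma_i\succeq\Omega(\delta^2)I$, pushing $F$ through $\Sigma_i^{-1/2}(\cdot)\Sigma_i^{-1/2}$ to get $O(\beta/\delta^{O(1)})$) is fine.

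The paper sidesteps this issue by choosing $X'$ differently: it takes $X'=M(L(X))$, where $M(x)=Ax+b$ is the exact affine normalization of $L(X)$ itself. Since $L(X)$ already has mean $O(\sqrt\eps)$ and covariance within $O(\sqrt\eps\log(1/\delta))$ of $I$ in Frobenius norm, one gets $\|A-I\|_F=O(\sqrt\eps\log(1/\delta))$ and $|b|=O(\sqrt\eps)$ by a square-root perturbation around the identity, which really is routine; the componentwise application of Fact~\ref{GaussianTVFact} then proceeds exactly as in your last paragraph. You can repair your write-up either by adopting this choice of $X'$, by absorbing the orthogonal polar factor of $W$ into the definition of $X'$ (a rotation preserves mean zero, identity covariance, and $\dtv(G_1',G_2')$), or by honestly proving the weighted perturbation bound $\|\hat\Sigma^{-1/2}\Sigma^{1/2}-I\|_F=O(\beta)$.
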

\begin{proof}
We begin by using Proposition \ref{mixtureCovarianceExtimationProp} to learn a $\hat{\Sigma}$ so that $\| \Sigma^{-1/2}(\hat{\Sigma}-\Sigma)\Sigma^{-1/2}\|_F \leq O(\log(1/\delta)\sqrt{\eps}).$ It is then the case that $X_1 :={\hat \Sigma}^{-1/2} X$ is a mixture of Gaussians ${\hat \Sigma}^{-1/2} G_1, {\hat \Sigma}^{-1/2}G_2$ with covariance $O(\sqrt{\eps}\log(1/\delta))$-close to $I$ in Frobenius norm. Since $\cov(X_1)=O(I)$, and since we have sample access to $\eps$-corrupted samples from $X_1$, using Theorem \ref{basicCovLearnerTheorem}, we can estimate $\E[X_1]$ to error $O(\sqrt{\eps})$ giving us a value $\hat \mu$.

We now define $L(x) = {\hat \Sigma}^{-1/2}x-\hat \mu$. It is clear that applying $L$ to $X$ gives a random variable with mean close to $0$ and covariance close to the identity. This means that there is an affine transformation $M(x)=Ax+b$ so that $M(X_1)$ has mean $0$ and identity covariance with $\|A-I\|_F=O(\sqrt{\eps}\log(1/\delta))$ and $|b|=O(\sqrt{\eps})$. We next note that $X_1 = (L(G_1)+L(G_2))/2$. We claim that $L(G_i)$ is $O(\sqrt{\eps}\log(1/\delta)/\delta)$-close to $M(L(G_i))$. This would imply that $X_1$ was $O(\sqrt{\eps}\log(1/\delta)/\delta)$-close to $M(L(X))=X'$, which clearly satisfies our hypotheses.

To do this we note that $I \gg \cov(L(G_i)) \gg \delta^2 I$ and the $|\E[L(G_i)]|=O(1)$. From this, and the bounds on $A-I$ and $b$, we can infer that $\|\cov(L(G_i))-\cov(M(L(G_i)))\|_F = O(\sqrt{\eps}\log(1/\delta))$ and $|\E[L(G_i)] - \E[M(L(G_i))]| = O(\sqrt{\eps}\log(1/\delta))$. The result now follows from Fact \ref{GaussianTVFact}.
\end{proof}

Thus, assuming that the Gaussians $G_i$ are not $(1-\delta)$-separated, we can apply Proposition \ref{standardFormReductionProp} to reduce to the case where the $G_i$ are not separated and where $\cov(X)=I$ and $\E[X]=0$ (assuming we are willing to replace our error by $\eps'=O(\sqrt{\eps}\log(1/\delta)/\delta^2)$). We note that in this case we can write $G_1 = N(\mu,I-\mu\mu^T+\Sigma)$ for some $\mu$ and $\Sigma$. In this case, $G_2$ would then be $N(-\mu,I-\mu\mu^T-\Sigma)$. Since $G_1$ and $G_2$ are not too far separated, $\|\Sigma\|_F = O(\log(1/\delta))$. Also, by the lack of separation, we have that $\cov(G_i) \gg \delta^2 I$. Therefore, if we can learn $\mu$ and $\Sigma$ to $L^2$ slash Frobenius error $\eta$, then we can learn $G_i$ to error $O(\eta/\delta^2)$.

Thus, from here on out, we will assume that $G_1 = N(\mu,I-\mu\mu^T+\Sigma)$ and $G_2 = N(-\mu,I-\mu\mu^T-\Sigma)$ with $\|\Sigma\|_F = O(\log(1/\delta))$. We will show an algorithm that given access to $\eps$-corrupted samples to $X$, makes polynomially many guesses at least one of which is likely to be within $\eta$ of $(\mu,\Sigma).$

\section{Moment Estimation}\label{momentsSec}

In this section, we show that we can compute the higher moments of $X$ in the above situation and discuss what that means.
\begin{lemma}
Let $X=(G_1+G_2)/2$ with $G_1 = N(\mu,I-\mu\mu^T+\Sigma)$ and $G_2 = N(-\mu,I-\mu\mu^T-\Sigma)$. Then
$$
E[h_4(X)] = \sym(3\Sigma^{\otimes 2}-2\mu^{\otimes 4}).
$$
$$
E[h_6(X)] = 16 \mu^{\otimes 6}.
$$
\end{lemma}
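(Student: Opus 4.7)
The plan is to reduce both tensor identities to scalar identities by contracting against $v^{\otimes m}$ for an arbitrary vector $v \in \R^d$. Since $\E[h_m(X)]$, $\sym(3\Sigma^{\otimes 2} - 2\mu^{\otimes 4})$, and $\mu^{\otimes 6}$ are all symmetric tensors, and since a symmetric tensor is uniquely determined by the associated homogeneous polynomial $v \mapsto T \cdot v^{\otimes m}$ (polarization), it suffices to match scalars. Fix $v$ and set $s = |v|^2$, $\alpha = v \cdot \mu$, and $\beta = v^T \Sigma v$. Inspection of the definition of $h_m$ shows that $h_m(x) \cdot v^{\otimes m} = H_m(v \cdot x;\, s)$, where $H_m(y; s) := \sum_\ell \binom{m}{2\ell}(2\ell-1)!!(-s)^\ell y^{m-2\ell}$ is the scaled one-dimensional Hermite polynomial, because contracting each $-I_{i_a i_b}$ with $v_{i_a} v_{i_b}$ gives $-s$ and contracting each $x_{i_c}$ with $v_{i_c}$ gives $v\cdot x$.

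Next, I would combine the classical generating function $\sum_{m \geq 0} H_m(y; s)\, t^m/m! = \exp(ty - s t^2/2)$ with the Gaussian moment generating function to get $\sum_m \E[H_m(Y; s)]\, t^m/m! = \exp(t\mu_0 + t^2(\sigma_0^2 - s)/2)$ for $Y \sim N(\mu_0, \sigma_0^2)$. Since $v \cdot G_1 \sim N(\alpha,\, s - \alpha^2 + \beta)$ and $v \cdot G_2 \sim N(-\alpha,\, s - \alpha^2 - \beta)$, averaging the two exponentials and pulling out the common factor $e^{-t^2 \alpha^2/2}$ yields
$$
F(t) \ := \ \sum_{m \geq 0} \bigl(\E[h_m(X)] \cdot v^{\otimes m}\bigr) \frac{t^m}{m!} \ = \ e^{-t^2 \alpha^2 / 2}\, \cosh\!\bigl(t\alpha + t^2 \beta / 2\bigr).
$$

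To finish, I extract the coefficients $[t^4/4!]F$ and $[t^6/6!]F$ and match them against $\sym(3\Sigma^{\otimes 2} - 2\mu^{\otimes 4}) \cdot v^{\otimes 4} = 3\beta^2 - 2\alpha^4$ and $16\mu^{\otimes 6} \cdot v^{\otimes 6} = 16\alpha^6$ respectively (the $\sym$ is invisible here because $v^{\otimes m}$ is already symmetric). Expanding $\cosh(u) = 1 + u^2/2 + u^4/24 + u^6/720 + \cdots$ in $u = t\alpha + t^2\beta/2$ and multiplying by $e^{-t^2\alpha^2/2} = 1 - t^2\alpha^2/2 + t^4\alpha^4/8 - t^6\alpha^6/48 + \cdots$, the $t^4$ coefficient works out to $\beta^2/8 - \alpha^4/12$, so $4! \cdot [t^4]F = 3\beta^2 - 2\alpha^4$. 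At order $t^6$, no $\beta^3$ monomial can appear (in $u^{2n}$ the $\beta^3$ coefficient carries the factor $t^{2n+3}$, which is always odd); the $\alpha^2\beta^2$ contributions from $u^4/24$ and from $u^2/2 \cdot (-t^2\alpha^2/2)$ cancel exactly ($1/16 - 1/16 = 0$); and the $\alpha^6$ contributions from $u^6/720$, $u^4/24 \cdot (-t^2\alpha^2/2)$, $u^2/2 \cdot (t^4\alpha^4/8)$, and $1 \cdot (-t^6\alpha^6/48)$ sum to $(1/720 - 1/48 + 1/16 - 1/48)\alpha^6 = 16\alpha^6/720$, so $6! \cdot [t^6]F = 16\alpha^6$.

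The main obstacle is nothing conceptual, just the bookkeeping of a handful of Taylor coefficients; the key non-obvious step is the exact cancellation of the $\alpha^2\beta^2$ term at order $t^6$, which is what makes the final formula $\E[h_6(X)] = 16\mu^{\otimes 6}$ so strikingly clean, and which serves as the best internal sanity check that the algebra is right. An alternative more combinatorial route would apply Lemma \ref{HermiteExpectationLem} separately to $G_1$ (with $\tilde\Sigma_1 = -\mu\mu^T + \Sigma$) and $G_2$ (with $\tilde\Sigma_2 = -\mu\mu^T - \Sigma$), expand each pair contribution into a $-\mu\mu^T$ part and a $\pm\Sigma$ part, and observe that in the average only partition-types with an even number of $\Sigma$-pairs survive; the generating-function approach neatly circumvents this case analysis.
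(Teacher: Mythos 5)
Your proof is correct, and it takes a genuinely different route from the paper's. The paper works entirely at the tensor level: it applies Lemma \ref{HermiteExpectationLem} to each component (with covariance shifts $-\mu\mu^T\pm\Sigma$), observes that averaging the two components kills every term containing an odd total number of $\mu$ and $\Sigma$ factors, and then determines the coefficients $3$, $-2$, and $16$ by ``a careful accounting'' of the surviving partition types --- which is essentially the combinatorial alternative you sketch in your last sentence. You instead reduce to one dimension by contracting with $v^{\otimes m}$ (legitimate, since all tensors involved are symmetric and the paper itself uses this polarization step in Proposition \ref{momentsProp}), identify $h_m(x)\cdot v^{\otimes m}$ with the scaled univariate Hermite polynomial, and read everything off the generating function $F(t)=e^{-t^2\alpha^2/2}\cosh(t\alpha+t^2\beta/2)$. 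I checked the coefficient extraction: $24\,[t^4]F=3\beta^2-2\alpha^4$ and $720\,[t^6]F=(1-15+45-15)\,\alpha^6=16\alpha^6$, with the $\alpha^2\beta^2$ cancellation and the parity argument ruling out $\beta^3$ both as you state, and these match $\sym(3\Sigma^{\otimes 2}-2\mu^{\otimes 4})\cdot v^{\otimes 4}$ and $16\mu^{\otimes 6}\cdot v^{\otimes 6}$. What your route buys is that the final coefficients fall out of a mechanical Taylor expansion rather than a partition count that the paper leaves implicit, and the clean form of $\E[h_6(X)]$ is explained by a visible cancellation; what the paper's route buys is that it stays within the tensor formalism of Lemma \ref{HermiteExpectationLem}, the same machinery reused immediately afterwards for the covariance bound in Lemma \ref{HermiteVarianceLem}.
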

\begin{proof}
We note that $\E[h_6(X)] = (\E[h_6(G_1)]+\E[h_6(G_2)])/2$. By Lemma \ref{HermiteExpectationLem}, this is half of
\begin{align*}
& \sym(15(\Sigma-\mu^{\otimes 2})^{\otimes 3}+45 \mu^{\otimes 2}(\Sigma-\mu^{\otimes 2})^{\otimes 2}+15\mu^{\otimes 4}(\Sigma-\mu^{\otimes 2})+\mu^{\otimes 6} )\\
+ & \sym(15(-\Sigma-\mu^{\otimes 2})^{\otimes 3}+45 \mu^{\otimes 2}(-\Sigma-\mu^{\otimes 2})^{\otimes 2}+15\mu^{\otimes 4}(-\Sigma-\mu^{\otimes 2})+\mu^{\otimes 6} )
\end{align*}
We note that the expression in the second line is obtained from that in the first by negating every $\mu$ and $\Sigma$ term. This means that all terms cancel out except for those that are tensor products of an even number of $\mu$'s and $\Sigma$'s. This leaves only $\mu^{\otimes 2}\Sigma^{\otimes 2}$ and $\mu^{\otimes 6}$ terms. A careful accounting of the number of each term left yields the desired answer.

We note that $\E[h_4(X)] = (\E[h_4(G_1)]+\E[h_4(G_2)])/2$. By Lemma \ref{HermiteExpectationLem}, this is half of
\begin{align*}
& \sym(3(-\mu\otimes \mu+\Sigma)^{\otimes 2}+6\mu\otimes\mu\otimes (-\mu\otimes \mu+\Sigma)+\mu^{\otimes 4})\\+ & \sym(3(-\mu\otimes \mu-\Sigma)^{\otimes 2}+6(-\mu)\otimes(-\mu)\otimes (-\mu\otimes \mu-\Sigma)+(-\mu)^{\otimes 4}).
\end{align*}
We note that the expression in the second line is obtained from that in the first by negating every $\mu$ and $\Sigma$ term. This means that all terms cancel out except for those that are tensor products of an even number of $\mu$'s and $\Sigma$'s. This leaves only $\Sigma^{\otimes 2}$ and $\mu^{\otimes 4}$. A careful count of the number of copies of each gives the stated result.
\end{proof}

To show that we can compute these moments, we need to know that the covariance of $h_m(X)$ is bounded so that we can apply Theorem \ref{basicCovLearnerTheorem}.
\begin{lemma}
In the above situation
$$
\left\|\cov(h_m(X)) \right\|_2 = O_m(1+\|\Sigma\|_F+|\mu|)^{2m}.
$$
\end{lemma}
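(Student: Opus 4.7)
The plan is to reduce the operator-norm bound on $\cov(h_m(X))$ to a bound on $\|\E[h_m(G_i)^{\otimes 2}]\|_2$ for the two component Gaussians, and then expand the latter via Lemma \ref{HermiteVarianceLem}. The first reduction uses that for any random $m$-tensor $Y$, $\cov(Y)=\E[Y^{\otimes 2}]-\E[Y]^{\otimes 2}$ with $\E[Y]^{\otimes 2}$ positive semi-definite as a rank-one operator on $m$-tensors, so $\cov(Y)\preceq \E[Y^{\otimes 2}]$ and hence $\|\cov(Y)\|_2\leq \|\E[Y^{\otimes 2}]\|_2$. Since $X=(G_1+G_2)/2$, we have $\E[h_m(X)^{\otimes 2}]=\tfrac12\sum_i \E[h_m(G_i)^{\otimes 2}]$, so it suffices to bound each $\|\E[h_m(G_i)^{\otimes 2}]\|_2$ by $O_m((1+\|\Sigma\|_F+|\mu|)^{2m})$.

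I would then apply Lemma \ref{HermiteVarianceLem} to each component $G_i=N(\pm\mu, I-\mu\mu^T\pm\Sigma)$, taking the lemma's ``$\Sigma$'' to be $\Sigma_i^{\star}:=-\mu\mu^T\pm\Sigma$ (so that the lemma's ``$I+\Sigma$'' is $\cov(G_i)$). The lemma writes $\E[h_m(G_i)^{\otimes 2}]$ as an $O_m(1)$-sized sum over partitions $P$ of $[2m]$ into parts of size $1$ and $2$: each same-half pair $\{a,b\}$ contributes a factor $\Sigma_i^{\star}$ on indices $(i_a,i_b)$, each crossing pair contributes $\cov(G_i)$, and each singleton contributes $\pm\mu$. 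By the triangle inequality it is enough to bound the operator norm of each such term.

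The heart of the argument is the per-term operator-norm bound. Fix a partition with $p_L$ same-left pairs, $p_R$ same-right pairs, $p_C$ crossing pairs, and $s_L,s_R$ singletons per side, so $2p_L+p_C+s_L=2p_R+p_C+s_R=m$. Test the operator on unit Frobenius-norm $m$-tensors $v$ (left) and $w$ (right). Partial-contract $v$ against the $p_L$ same-left $\Sigma_i^{\star}$ factors and $s_L$ left $\mu$ factors; iterated Cauchy--Schwarz yields $\tilde v\in(\R^d)^{\otimes p_C}$ with $\|\tilde v\|_F\leq \|\Sigma_i^{\star}\|_F^{p_L}|\mu|^{s_L}$, and analogously for $\tilde w$. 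The remaining crossing-pair factors act on $\tilde v\otimes \tilde w$ as a tensor product of $p_C$ copies of $\cov(G_i)$, whose joint operator norm equals $\|\cov(G_i)\|_2^{p_C}$. Using $\|\Sigma_i^{\star}\|_F\leq |\mu|^2+\|\Sigma\|_F$, $\|\cov(G_i)\|_2\leq 1+|\mu|^2+\|\Sigma\|_F$, and the index count $2(p_L+p_R+p_C)+s_L+s_R=2m$, the per-term bound is
$$
(|\mu|^2+\|\Sigma\|_F)^{p_L+p_R}|\mu|^{s_L+s_R}(1+|\mu|^2+\|\Sigma\|_F)^{p_C}\leq (1+|\mu|+\|\Sigma\|_F)^{2m},
$$
where the last inequality uses $|\mu|^2+\|\Sigma\|_F\leq(1+|\mu|+\|\Sigma\|_F)^2$. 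Summing over the $O_m(1)$ partitions gives the claim.

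The main subtle point is to use the \emph{operator} norm, not the Frobenius norm, for the crossing-pair factors. A Frobenius bound would introduce an extra $\sqrt d$ from $\|I\|_F$ per crossing pair, destroying the dimension-free estimate; the operator-norm bound works only because after the same-side partial contractions the crossing-pair factors act as a \emph{genuine tensor product} of $p_C$ matrices, so that $\|A_1\otimes\cdots\otimes A_{p_C}\|_2=\prod_k\|A_k\|_2$. Everything else is bookkeeping about which index lives on which side and the elementary estimates on $\|\Sigma_i^{\star}\|_F$ and $\|\cov(G_i)\|_2$.
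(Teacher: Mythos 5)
Your proposal is correct and follows essentially the same route as the paper: bound $\cov(h_m(X))$ by the PSD-dominating second-moment tensor $\E[h_m(X)\otimes h_m(X)]$, expand it via Lemma \ref{HermiteVarianceLem} (applied per mixture component) into $O_m(1)$ partition terms, and bound each term's operator norm by contracting the same-side $\Sigma$/$\mu$ factors in Frobenius norm while exploiting that the crossing factors act with bounded operator norm. Your write-up is in fact a bit more explicit than the paper's (averaging over the two components and keeping the crossing factor as $\cov(G_i)$ rather than expanding it into $I$, $\mu\mu^T$, $\Sigma$ monomials), but the underlying argument is the same.
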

\begin{proof}
We note that by definition $\cov(h_m(X))$ is upper bounded by $T=\E[h_m(X)\otimes h_m(X)]$, where this $2m$-tensor is thought of as a matrix over $m$-tensors acting by multiplying the first $m$ entries. We note that by Lemma \ref{HermiteVarianceLem} that $T$ is a sum of terms each of which are a product of copies of $\mu,\Sigma$ and $I$ where the copies of $I$ all cross between the first $m$ entries and the last $m$. We claim that any individual term of this form has operator norm $O(1+\|\Sigma\|_F+|\mu|)^{2m}$. For such a term $T'$, we consider the size of $T'_{i_1\ldots i_{2m}}A_{i_1\ldots i_m}$ for some $m$-tensor $A$. First, we consider the effect of multiplying $A$ by the copies of $I$ in $T'$. Since these terms always have one coordinate in the first $m$ and one in the last $m$ this, corresponds to multiplying $A$ by the identity in some coordinate, and thus does not affect the $L^2$ norm. We have at most $2m$ other terms that are all copies of $\Sigma$ or $\mu$ and multiplying by them each increases the norm of the resulting matrix by a factor of at most $\|\Sigma\|_F+|\mu|$. This completes the proof.
\end{proof}

Combining the above with Theorem \ref{basicCovLearnerTheorem} we obtain the following:
\begin{corollary}\label{momentEstimationCorollary}
Given $X$ as above there exists a polynomial time algorithm that given access to $\eps$-noisy samples from $X$ computes $\sym(6\Sigma^{\otimes 2}-4\mu^{\otimes 4})$ to Frobenius error $O(\sqrt{\eps}(1+\|\Sigma\|_F+|\mu|)^4),$ and $\mu^{\otimes 6}$ to Frobenius error $O(\sqrt{\eps}(1+\|\Sigma\|_F+|\mu|)^6).$
\end{corollary}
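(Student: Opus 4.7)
The plan is to apply the basic robust mean estimator of Theorem \ref{basicCovLearnerTheorem} directly to the random tensors $h_4(X)$ and $h_6(X)$, rather than to $X^{\otimes m}$. The first observation is that $h_m$ is a deterministic, coordinate-wise polynomial map, so any $\eps$-noisy sample oracle for $X$ gives, with no additional noise, an $\eps$-noisy sample oracle for $h_m(X)$ viewed as a vector in $\R^{d^m}$: we simply apply $h_m$ to each (possibly corrupted) sample. The preceding two lemmas give us exactly the two ingredients we need about this random tensor, namely that its mean is the quantity we want (up to an explicit scalar) and that its covariance is bounded in operator norm.

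Concretely, by the moment lemma above, $\E[h_4(X)] = \sym(3\Sigma^{\otimes 2}-2\mu^{\otimes 4})$ and $\E[h_6(X)] = 16\,\mu^{\otimes 6}$, so estimating these two means to Frobenius error $\eta_m$ and multiplying by $2$ or dividing by $16$ respectively yields the two desired tensors to Frobenius error $O(\eta_m)$. By the second lemma, $\|\cov(h_m(X))\|_2 \leq C_m$ with $C_m = O_m((1+\|\Sigma\|_F+|\mu|)^{2m})$, which for the two fixed values $m\in\{4,6\}$ is just a finite scalar. To match the hypothesis of Theorem \ref{basicCovLearnerTheorem}, rescale: the random vector $Y_m := h_m(X)/\sqrt{C_m}$ has $\cov(Y_m)\leq I$ (as tensors regarded as vectors in $\R^{d^m}$), and an $\eps$-noisy sample of $X$ gives an $\eps$-noisy sample of $Y_m$ by the same deterministic-map argument.

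Theorem \ref{basicCovLearnerTheorem} applied to $Y_m$ then returns in polynomial time an estimate $\hat y_m$ with $\|\hat y_m - \E[Y_m]\|_2 = O(\sqrt{\eps})$ with high probability; rescaling, $\sqrt{C_m}\,\hat y_m$ is an estimate of $\E[h_m(X)]$ of Frobenius error $O(\sqrt{C_m \eps}) = O\bigl(\sqrt{\eps}\,(1+\|\Sigma\|_F+|\mu|)^m\bigr)$. Taking $m=4$ and multiplying by $2$ gives $\sym(6\Sigma^{\otimes 2}-4\mu^{\otimes 4})$ to error $O(\sqrt{\eps}(1+\|\Sigma\|_F+|\mu|)^4)$, and taking $m=6$ and dividing by $16$ gives $\mu^{\otimes 6}$ to error $O(\sqrt{\eps}(1+\|\Sigma\|_F+|\mu|)^6)$, as claimed.

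There is no real obstacle here: the work has already been done by the two preceding lemmas and by Theorem \ref{basicCovLearnerTheorem}. The only thing to verify carefully is that applying $h_m$ sample-by-sample really does preserve the noise model, which is immediate because $h_m$ is a fixed function and the adversary's choice of which samples to corrupt (and what to replace them with) can be thought of equivalently in $X$-space or in $h_m(X)$-space. The dependence of $C_m$ on $\|\Sigma\|_F$ and $|\mu|$ is exactly why the final error bounds carry the $(1+\|\Sigma\|_F+|\mu|)^m$ factor, and under the standing assumption $\|\Sigma\|_F=O(\log(1/\delta))$ together with $|\mu|=O(1)$ (inherited from the normalization of Section \ref{reductionSec}), this factor is polylogarithmic in $1/\delta$.
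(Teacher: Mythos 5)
Your proposal is correct and is exactly the argument the paper intends: the corollary is stated as a direct combination of the two preceding lemmas (mean and covariance of the Hermite tensors $h_4(X)$, $h_6(X)$) with the robust mean estimator of Theorem \ref{basicCovLearnerTheorem}, applied after noting that pushing each (possibly corrupted) sample through the deterministic map $h_m$ preserves the $\eps$-noisy model, and then rescaling by the constant factors $2$ and $1/16$. Your observation that the rescaling constant is effectively known under the standing bounds $\|\Sigma\|_F = O(\log(1/\delta))$, $|\mu| = O(1)$ matches how the paper uses the corollary.
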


Now that we can approximate these tensors, we want to show that we can use them to approximate $\mu$ and $\Sigma$. We begin by showing that we can approximate $\mu$ from an approximation of $\mu^{\otimes 6}$.
\begin{proposition}\label{muEstimationProp}
Let $\mu$ be a vector and $\eta>0$ a parameter. There exists a polynomial time algorithm that given a $6$-tensor $M$ with
$$
\|M-\mu^{\otimes 6}\|_F \leq \eta
$$
computes a vector $\hat \mu$ so that with probability at least $1/3$ $|\hat\mu-\mu| = O(\eta^{1/6})$.
\end{proposition}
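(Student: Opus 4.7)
\medskip

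\noindent\textbf{Proof plan for Proposition \ref{muEstimationProp}.}

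My plan is to recover $\mu$ (up to a sign) by spectral methods on successive matrix unfoldings of the $6$-tensor $M$, and then to resolve the unavoidable sign ambiguity with a uniformly random coin flip. Since $(-\mu)^{\otimes 6} = \mu^{\otimes 6}$, the tensor $\mu^{\otimes 6}$ determines $\mu$ only up to sign, so a deterministic recovery is impossible; the $1/3$ success probability in the statement is precisely what lets us get away with a random sign.

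First, I would compute $\|M\|_F$ and use the threshold test: since $\bigl|\,\|M\|_F - |\mu|^6\,\bigr| \le \eta$, if $\|M\|_F \le C\eta$ for a large constant $C$, the algorithm returns $\hat\mu = 0$, giving error $|\mu| = O(\eta^{1/6})$ automatically. Otherwise we have $|\mu|^6 \ge (C-1)\eta$, and in particular $|\mu| \gg \eta^{1/6}$, which is the regime in which we need the spectral argument to give nontrivial angular accuracy.

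Next I would unfold $M$ as a $d^3\times d^3$ matrix $M_1$ by grouping the first three indices against the last three (symmetrizing via $(M_1+M_1^T)/2$ if convenient). The unfolding of the ground truth is the symmetric rank-one matrix $\mu^{\otimes 3}(\mu^{\otimes 3})^T$ with unique nonzero eigenvalue $|\mu|^6$, and unfolding preserves Frobenius norm, so $\|M_1 - \mu^{\otimes 3}(\mu^{\otimes 3})^T\|_F \le \eta$. Applying the Davis--Kahan theorem with spectral gap $|\mu|^6$, the top unit eigenvector $\hat w$ of $M_1$ satisfies $\|\hat w - \tau_1 \mu^{\otimes 3}/|\mu|^3\|_2 = O(\eta/|\mu|^6)$ for some unknown sign $\tau_1 \in \{\pm1\}$. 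I would then reshape $\hat w \in \R^{d^3}$ into a $d \times d^2$ matrix $W$; its target $\tau_1 \mu(\mu^{\otimes 2})^T/|\mu|^3$ is rank one with singular value $1$, and $\|W - \tau_1\mu(\mu^{\otimes 2})^T/|\mu|^3\|_F = O(\eta/|\mu|^6)$. A second application of perturbation theory (Wedin) to $W$ yields a unit left singular vector $\hat u$ with $\|\hat u - \tau_2\,\mu/|\mu|\|_2 = O(\eta/|\mu|^6)$ for some sign $\tau_2$, noting that the two singular-vector sign ambiguities collapse into a single overall sign since $\hat u$ is only defined up to sign. Finally I would set
\[
\hat\mu \;=\; \sigma\,\|M\|_F^{1/6}\,\hat u,
\]
with $\sigma \in \{\pm1\}$ drawn uniformly at random.

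To finish, a triangle inequality combining (i) $\bigl|\|M\|_F^{1/6} - |\mu|\bigr| = O(\eta/|\mu|^5)$, (ii) the singular vector error $O(\eta/|\mu|^6)$, and (iii) the event $\sigma = \tau_2$ (which occurs with probability $1/2 \ge 1/3$) gives $|\hat\mu - \mu| = O(\eta/|\mu|^5)$, and since we are in the regime $|\mu| \ge \Omega(\eta^{1/6})$ this is $O(\eta^{1/6})$ as required. The main conceptual obstacle is simply keeping track of the sign ambiguities and verifying that there is really only one, which is then absorbed by the random $\sigma$; the rest is standard matrix perturbation theory applied to the tensor unfoldings.
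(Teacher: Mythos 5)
Your proposal is correct and takes essentially the same route as the paper: a rank-one spectral approximation of a matrix unfolding of $M$ recovers $\mu$ up to sign and scale, with the sign ambiguity absorbed by a random guess (the paper works with a single $d\times d^5$ unfolding and guesses among $0$, $u$, $-u$, giving probability $1/3$, while your norm threshold plus random sign gives $1/2$). Your two-stage unfolding ($d^3\times d^3$ eigenvector, then reshaping to $d\times d^2$ and applying Wedin) is a harmless but unnecessary extra step---one SVD of the $d\times d^5$ unfolding already suffices.
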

\begin{proof}
Thinking of $M$ as a $d\times d^5$ matrix, we note that it is $\eta$-close to the rank $1$ matrix $\mu(\mu^{\otimes 5})^T$. Letting $M'$ be the closest rank $1$ approximation to $M$ (computed via a singular value decomposition), we have that $M' = u v^T$ is $O(\eta)$-close to $\mu(\mu^{\otimes 5})^T$. From here it is easy to see that either $u$ or $-u$ must be $O(\eta/|\mu|^5)$-close to $\mu$. Therefore, either $\mu$ is $O(\eta^{1/6})$-close to $0$ or $O(\eta^{1/6})$ close to $u$ or $-u$. Guessing which case we are in gives an appropriate answer with probability $1/3$.
\end{proof}

We note that given this and out approximation to the $4^{th}$ moment, we can approximate $\sym(\Sigma^{\otimes 2})$ to error $\tilde O(\eps^{1/12})$.

We have left to show that knowing this tensor is sufficient to learn $\Sigma$.
\begin{proposition}\label{SigmaEstimationProp}
Let $\Sigma$ be a symmetric matrix with $\|\Sigma\|_F \leq 1$, and $\eta>0$ a parameter. There exists an algorithm that given a $4$-tensor $M$ with
$$
\|M-\sym(\Sigma^{\otimes 2})\|_F \leq \eta
$$
runs in polynomial time and with probability at least $\poly(\eta)$ returns $\hat{\Sigma}$ so that
$$
\|\hat{\Sigma}-\Sigma\|_F = O(\eta^c)
$$
for some positive constant $c > 0$.
\end{proposition}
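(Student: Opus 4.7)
The plan is to exploit a single basic identity: contracting $M$ with $v\otimes v$ in its last two indices gives a matrix that equals a scalar multiple of $\Sigma$ plus a rank-$1$ perturbation, up to $O(\eta)$ Frobenius error. The algorithm recovers $\Sigma$ by estimating the rank-$1$ perturbation directly from $M$ and cancelling it.

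Concretely, unfolding $\sym(\Sigma^{\otimes 2})$ shows that for every vector $v$,
\[
A_v := M(\cdot,\cdot,v,v) \;=\; \tfrac{1}{3}\bigl(s\,\Sigma + 2(\Sigma v)(\Sigma v)^{T}\bigr) + E_v,\qquad \|E_v\|_{F}=O(\eta),
\]
where $s := v^{T}\Sigma v$. Further contractions of $A_v$ against $v$ give $v^{T}A_v v = s^{2}+O(\eta)$ and $A_v v = s\,\Sigma v + O(\eta)$. So from $A_v$ alone we can read off $|s|$, guess its sign (two options), define $u := A_v v / s \approx \Sigma v$, and set
\[
\hat\Sigma \;:=\; \tfrac{3}{s}\bigl(A_v - \tfrac{2}{3}\,u u^{T}\bigr).
\]
Propagating the $O(\eta)$ error through these steps and using $\|\Sigma\|_{F}\leq 1$ gives $\|\hat\Sigma-\Sigma\|_{F} = O(\eta/s^{2})$. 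Hence whenever we can find a $v$ with $|s|\geq \eta^{(1-c)/2}$, the output has Frobenius error $O(\eta^{c})$, as required.

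The algorithmic burden therefore reduces to producing a unit vector $v$ with $|v^{T}\Sigma v|$ polynomially bounded below in $\eta$. Every candidate $v$ can be validated post hoc, since $M(v,v,v,v) = s^{2}+O(\eta)$ is directly computable from $M$ without knowing $\Sigma$. My plan is first to dispose of the trivial case (if $\|\Sigma\|_{F} \leq \eta^{c}$ return $\hat\Sigma=0$), and then to search for a good $v$ via a tensor-power-iteration-style scheme $v \leftarrow M(v,v,v,\cdot)/\|M(v,v,v,\cdot)\|$, which in the noiseless limit coincides with standard power iteration $v \mapsto \pm\Sigma v/\|\Sigma v\|$ on $\Sigma$ and therefore converges (up to sign) to the largest-magnitude eigenvector of $\Sigma$ whenever there is a non-trivial spectral gap.

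The main obstacle is the dimension-dependence in the ``flat'' case, where $\Sigma$ has $\|\Sigma\|_{F}\gg \eta^{c}$ but all eigenvalues of $\Sigma$ are small: neither a uniformly random $v$ on the sphere nor the iterate above is guaranteed to yield $|v^{T}\Sigma v|$ polynomially large in $\eta$ independent of the ambient dimension $d$. This is precisely the scenario where the ``random projection plus guessing'' idea advertised in the introduction enters. I would project $M$ onto a random low-dimensional subspace $W$ via $\tilde M := \Pi_{W}^{\otimes 4}(M)\approx \sym(\tilde\Sigma^{\otimes 2})$ with $\tilde\Sigma=\Pi_{W}\Sigma\Pi_{W}$, brute-force over a $\poly(\eta)$-size net of directions in $W$ to locate a $v\in W$ with $|v^{T}\tilde\Sigma v|$ polynomially large, and then plug the lifted $v$ into the contraction-and-cancellation formula above. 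Proving that a random $W$ retains enough of the signal with $\poly(\eta)$ probability---so that the recovered $\tilde\Sigma$ lifts to an $\hat\Sigma$ within polynomial-in-$\eta$ Frobenius error of the true $\Sigma$---is what I expect to be the technical heart of the proof.
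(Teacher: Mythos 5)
Your contraction identity and the final cancellation step are sound, and they essentially mirror the endgame of the paper's own proof (which likewise guesses an unknown scalar and subtracts a low-rank nuisance term from a contraction of $M$). The gap is in the step you yourself flag as the heart of the matter, and it is not merely technical: your scheme needs a direction $v$ with $|v^T\Sigma v|\ge\poly(\eta)$ independent of $d$, and no such direction need exist. Since $\Sigma$ is symmetric but in general indefinite, $\max_{|v|=1}|v^T\Sigma v|=\|\Sigma\|_2$, and for the flat example $\Sigma=\mathrm{diag}(\pm d^{-1/2})$ with $\|\Sigma\|_F=1$ this is $d^{-1/2}$. Restricting to a subspace $W$, random or not, can only decrease the quadratic form, so the ``lifted'' $v$ from your random projection still has $s=v^T\Sigma v$ with $|s|\le d^{-1/2}$ (indeed for a random $k$-dimensional $W$ the restriction $\Pi_W\Sigma\Pi_W$ typically has norm only about $k/d$), and your error bound $O(\eta/s^2)=\Omega(\eta d)$ is not $O(\eta^c)$ for any dimension-independent $c$. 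So the projection-plus-net step cannot be repaired in the form you propose; your argument does cover the regime $\|\Sigma\|_2\ge\poly(\eta)$ (which subsumes the paper's near-rank-$3$ case, where $\|\Sigma\|_2=\Omega(1)$), but the flat regime is exactly where the proposition is hard.

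The paper's way around this is to replace the quadratic form by a bilinear one: for independent Gaussian vectors $x,y$, the scalar $x^T\Sigma y$ has variance $\|\Sigma\|_F^2=1$ and hence is $\Theta(1)$ with constant probability regardless of the spectrum of $\Sigma$. The price is that the contraction $M_{i_1i_2i_3i_4}x_{i_1}y_{i_2}$ equals $(x^T\Sigma y)\Sigma/3$ plus the rank-two term $\tfrac13\bigl((\Sigma x)\otimes(\Sigma y)+(\Sigma y)\otimes(\Sigma x)\bigr)$, whose two unknown vectors cannot be read off by further contraction the way your rank-one term $(\Sigma v)(\Sigma v)^T$ can. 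The paper therefore takes two pairs $x,y$ and $z,w$, guesses the scalars $x^T\Sigma y$ and $z^T\Sigma w$, cancels the $\Sigma$ components of the two contractions to obtain an approximately rank-four matrix whose column span nearly contains $\Sigma x,\Sigma y,\Sigma z,\Sigma w$, and then guesses $\Sigma x,\Sigma y$ over a $\poly(1/\eta)$-size net inside that four-dimensional span before applying a cancellation formula of exactly your type; the case where $\Sigma$ is $\eta^{1/2}$-close to rank $3$ (needed so that the four vectors are robustly independent) is treated separately by an SVD and guessing in a three-dimensional span. Some version of this idea---independent contraction directions plus guessing inside a constant-dimensional span---is what your proposal is missing.
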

\begin{proof}
We begin by reducing to the case where $\|\Sigma\|_F=1$.

We note that by guessing $s=\|\Sigma\|_F$ to error $\eta$, we can divide $M$ by $s^2$ to get an $\eta/s^2$ approximation to $\sym((\Sigma/s)^{\otimes 2})$. If $s\leq \eta^{1/3}$, we can take $\hat\Sigma = 0$. Otherwise, we have an $O(\eta^{1/3})$-approximation of $\sym(\Sigma'^{\otimes 2})$ for some $\Sigma'$ of Frobenius norm $1$. If we can solve the problem in this case, finding a $\hat\Sigma$, so that $\|\hat\Sigma -\Sigma'\|_F = O((\eta/s^2)^c)=O(\eta^{c/3})$, we can solve the original problem by returning $s\hat\Sigma.$

We next split into cases based upon whether $\Sigma$ is $\eta^{1/2}$-close in Frobenius norm to a rank-$3$ matrix.

If $\Sigma$ is close, there are some vectors $u,v,w$ so that $\|\Sigma-(uu^T+vv^T+ww^T)\|_F = O(\eta^{1/2})$. Let $\Sigma'=uu^T+vv^T+ww^T$. Note that if we treat $\sym(\Sigma'^{\otimes 2})$ as a $d\times d^3$ matrix that it is rank at most $3$. Thus, $M$ (when treated as a $d\times d^3$ matrix) is also $O(\eta^{1/2})$ close to a rank $3$ matrix. Let $M'$ be the closest rank $3$ approximation to $M$ (obtained by a singular value decomposition). It is easy to see that $\|M'-\sym(\Sigma'^{\otimes 2})\|=O(\eta^{1/2})$. Let $V$ be the span of the singular vectors (on the $\R^d$ side) of $M'$. We claim that all of $u,v,w$ are close to $V$. In particular, if $z$ is a unit vector orthogonal to $V$, it is not hard to see that $\langle M' , z^{\otimes 4}\rangle = 0$ while
$$
\langle \sym(\Sigma'^{\otimes 2}),z^{\otimes 4} \rangle = \langle \Sigma'^{\otimes 2},z^{\otimes 4} \rangle = ((z\cdot u)^2+(z\cdot v)^2+(z\cdot w)^2)^2.
$$
On the other hand, the difference in these is at most $\|z^{\otimes 4}\|_F \|M'-\sym(\Sigma'^{\otimes 2})\|_F = O(\eta^{1/2})$. Therefore, for any such $z$, $|z\cdot u|,|z\cdot v|,$ and $|z\cdot w|$ are all $O(\eta^{1/8})$. Since this holds for all such $z$, this means that $u,v,w$ are all within $O(\eta^{1/8})$ of lying in $V$. Our algorithm can guess $\eta$-approximations to their projections onto $V$ (there are only $\poly(1/\eta)$ many possibilities), and if it succeeds, return $\hat\Sigma =\hat u \hat u^T+\hat v \hat v^T+\hat w \hat w^T$, which will be within $O(\eta^{1/8})$ of the true $\Sigma$.

Next we assume that $\Sigma$ is not $\eta^{1/2}$-close to a rank-$3$ matrix. In this case we consider the product $M_{i_1i_2i_3i_4}x_{i_1}y_{i_2}$ for $x$ and $y$ random Gaussian vectors. This is
$$
(x^T \Sigma y) \Sigma/3 + 1/3(\Sigma x)\otimes(\Sigma y) + 1/3(\Sigma y)\otimes(\Sigma x) + (M-\Sigma)(x\otimes y).
$$
We note that the last term has mean square $\|M-\Sigma\|_F^2=O(\eta^2)$. The first term is $\Sigma$ times something that is $\Theta(1)$ with $90\%$ probability, and the middle two terms yield a rank at most $2$ matrix. The algorithm picks four random Gaussian vectors, $x,y,z,w$. We note that with constant probability the following hold:
\begin{itemize}
\item $|x^T \Sigma y|,|z^T \Sigma w| = \Theta(1)$.
\item $\|(M-\Sigma)(x\otimes y)\|_F, \|(M-\Sigma)(z\otimes w)\|_F = O(\eta)$.
\item Each of $\Sigma x,\Sigma y, \Sigma z, \Sigma w$ is at least $\Omega(\eta^{1/2})$-far from the span of the other three.
\item Each of $\Sigma x,\Sigma y, \Sigma z, \Sigma w$ has norm $O(1)$.
\end{itemize}
That the first two conditions hold with high constant probability is clear. That the third one does depends on the assumption that $\Sigma$ is not $\eta^{1/2}$-close to a rank $3$ matrix. This means that for any three dimensional subspace $V$ we have that $\|\mathrm{Pr}_{V^\perp}\circ \Sigma\|_F > \eta^{1/2}$. This means that for random Gaussian $x$, with high constant probability $\Sigma x$ is $\Omega(\eta^{1/2})$-far from $V$. The last condition holds with high constant probability since the expected squared norm of $\Sigma x$ is $\|\Sigma\|_F^2 = 1$.

In the following, we assume that the algorithm has picked $w,x,y,z$ so that the above hold. The algorithm then guesses $\eta$-approximations $A$ and $B$ to $x^T \Sigma y$ and $z^T \Sigma w$, respectively (we note that guessing uniform random numbers in $[-1,1]$ is correct with $\eta^2$ probability). Assuming that this guess is correct we compute the matrix $D=A M_{i_1i_2i_3i_4}z_{i_1}w_{i_2} - B M_{i_1i_2i_3i_4}x_{i_1}y_{i_2}$. In this case we have that $D$ is $O(\eta)$-close in Frobenius norm to
$$
L:=A/3((\Sigma w)\otimes(\Sigma z) + (\Sigma z)\otimes(\Sigma w))-B/3((\Sigma x)\otimes(\Sigma y) + (\Sigma y)\otimes(\Sigma x)).
$$
We note that $L$ is a rank $4$ matrix. Letting $D'$ be the closest rank $4$ approximation to $D$ (found via singular value decomposition), we note that $\|D'-L\|=O(\eta)$. Let $V$ be the span of $D'$. We claim that each of $\Sigma w, \Sigma x, \Sigma y, \Sigma z$ are $O(\eta^{1/2})$-close to $V$. In particular, if $u$ is a unit vector orthogonal to $V$ then $D'u=0$ by definition. However,
$$
Lu = A/3((u\cdot \Sigma w)(\Sigma z) + (u\cdot \Sigma z)(\Sigma w))-B/3((u\cdot\Sigma x)(\Sigma y) + (u\cdot\Sigma y)(\Sigma x)).
$$
Since each of $\Sigma w, \Sigma x, \Sigma y, \Sigma z$ are at least $\Omega(\eta^{1/2})$-far from the span of the others, this has size at least
$$
\Omega(\eta^{1/2}(|u\cdot \Sigma w|+|u\cdot \Sigma x|+|u\cdot \Sigma y|+|u\cdot \Sigma z|)).
$$
On the other hand, $|(L-D')u|\leq \|L-D'\|_F|u| = O(\eta)$. Therefore, $\Sigma x, \Sigma y, \Sigma z$, and $\Sigma w$ must be $O(\eta^{1/2})$-close to $v$.

The algorithm now guesses random vectors $s,t$ in $V$ with norm at most $O(1)$. We note that with $\poly(\eta)$ probability that these are within $\eta$ of the projections of $\Sigma x$ and $\Sigma y$ onto $V$, and thus are $O(\eta^{1/2})$-approximations of $\Sigma x$ and $\Sigma y$. If this holds, then we note that $\Sigma$ is within $O(\eta^{1/2})$ of
$$
(3/A)(M_{i_1i_2i_3i_4}x_{i_1}y_{i_2}-s_{i_3} t_{i_4}/3 - t_{i_3} s_{i_4}/3).
$$
The algorithm returns this guess, which is sufficiently close with $\poly(\eta)$ probability. This completes the proof.

\end{proof}

\section{Putting Everything Together}\label{combineSec}

We now have all the necessary tools and are prepared to prove Theorem \ref{mainTheorem}.
\begin{proof}
We first design an algorithm that with probability $\poly(\eps)$ returns a hypothesis (that is a mixture of two Gaussians) that is $\poly(\eps)$ close to $X$. Running this $\poly(1/\eps)$ times, it is likely that at least one trial is actually close, and running the algorithm from Lemma \ref{tournamentLemma} over these hypotheses produced will give an appropriate answer.

We let $\delta$ be some very small polynomial in $\eps$. We begin by guessing whether $\dtv(G_1,G_2)$ is larger than $1-\delta$. If so, we apply Theorem \ref{separatedCaseTheorem} to get a hypothesis.

Otherwise, we apply Proposition \ref{standardFormReductionProp}. We then note that applying $L$ to our samples constitutes getting $\eps'$-noisy samples from $X'$. Writing $X'=(G_1'+G_2')/2$ with $G_1'=N(\mu,I-\mu\mu^T+\Sigma)$ and $G_2'=N(-\mu,I-\mu\mu^T-\Sigma)$, we apply Corollary \ref{momentEstimationCorollary} to these samples to learn $\eta=\tilde O(\sqrt{\eps'})$-approximations to $\mu^{\otimes 6}$ and $\sym(6\Sigma^{\otimes 2}-4\mu^{\otimes 4})$. We then apply Proposition \ref{muEstimationProp} to obtain a $\poly(\eps)$-approximation $\hat\mu$ to $\mu$ (assuming our guesses work). Letting $M$ be one sixth of the difference of our approximation to $\sym(6\Sigma^{\otimes 2}-4\mu^{\otimes 4})$ plus $4\hat\mu^{\otimes 4}$, we get that $M$ is a $\poly(\eps)$-approximation to $\sym(\Sigma^{\otimes 2})$. Letting $C$ be a large constant multiple of $\log(1/\delta)$, then $M/C^2$ is a $\poly(\eps)$-approximation to $\sym((\Sigma/C)^{\otimes 2})$, but since $\|\Sigma/C\|_F\leq 1$ we can apply Proposition \ref{SigmaEstimationProp} and multiply the answer by $C$ to get (with $\poly(\eps)$ probability) a $\poly(\eps)$ approximation $\hat\Sigma$ to $\Sigma$. We then return an answer of $G_1'\approx N(\hat \mu,I-\hat\mu \hat\mu^T+\hat\Sigma)$ and $G_2' \approx N(-\hat \mu,I-\hat\mu \hat\mu^T-\hat\Sigma)$. We note by Fact \ref{GaussianTVFact} that $(G_1'+G_2')/2$ is correct to error $\poly(\eps)/\delta^2$ (which is still $\poly(\eps)$ if $\delta$ was originally taken to be a sufficiently small polynomial in $\eps$). Taking $X=L^{-1}(X')$ gives our result.

The above gives an algorithm that gives a $\poly(\eps)$-approximation with probability $\poly(\eps)$. Running this algorithm $\poly(1/\eps)$ times, we get a list of hypotheses $H_1,\ldots,H_m$ so that with high probability there exists an $i$ so that $\dtv(H_i,X)<\poly(\eps)$. Applying Lemma \ref{tournamentLemma} yields our final result.
\end{proof}

\section{Conclusions and Further Work}\label{conclusionsSec}

This resolves one of the major outstanding problems in computational robust statistics. There are three natural ways to try to extend this result, which we will briefly discuss.

Firstly, our result only covers equally weighted mixtures, while one might actually want to deal with arbitrary mixtures. This is a slight problem for us due to our use of the clustering result from \cite{clustering2}. In particular, their algorithm will only be polynomial time if the weights of the individual components are bounded away from 0. It should be possible to deal with this issue as we only really needed their algorithm to work if the components are separated in terms of their covariance (i.e. if $\|\Sigma_X^{-1/2} (\Sigma_1-\Sigma_2)\Sigma_X^{-1/2}\|_F$ is large), which may be possible with their techniques even for very unbalanced mixtures. Another slight technical issue is that after normalizing $X$, the formulas in Section \ref{momentsSec} would need to be modified for the unequal weights case. This problem should still be solvable using techniques along the lines of the ones we use, but doing so is not entirely trivial.

Secondly, our dependence on $\eps$ is rather poor. In particular, given $\eps$-noisy samples, we only guarantee and error of $\poly(\eps)$ in our final approximation. On the other hand, it might be reasonable to aim for an error of $\tilde O(\eps)$ (even though error $O(\eps)$ is information-theoretically possible, it seems unlikely as we do not know how to achieve this error efficiently for even a single Gaussian). Improving things in this way would likely require substantial new ideas. In particular, after reducing to the non-separated, normalized case, our algorithm proceeds by learning $X$ to small error in parameter distance. Such techniques are going to inherently have polynomial gaps due to an integrality gap. In particular, based on results of \cite{LearningMixturesWithoutNoise2}, we know that there are pairs of mixtures with parameter distance $\eps$ but with total variational distance approximately $\eps^6$, however for other pairs of mixtures the error in parameter distance is comparable to the error in total variational distance. Together this means that sampling a Gaussian up to an $\eps$ error rate is only sufficient to learn the parameters to error $O(\eps^{1/6})$. However, any generic algorithm that learns a mixture of Gaussians from an $\eps^{1/6}$-approximation to its parameters can only learn the Gaussian to error $\eps^{1/6}$. Thus, any algorithm attempting to obtain substantially better final error than this will need to find a way of dealing with this inconsistent relationship between distance and parameter distance.

Perhaps the most substantial generalization would be to cover the case of mixtures of $k$ Gaussians for any constant $k$. Here, due to lower bounds of \cite{SQ} in the statistical query model, it is likely that the running time would need to be at least polynomial in $d^k$, but such an algorithm might be plausible. However, doing this would require a somewhat substantial generalization of our techniques. Firstly, it is no longer sufficient to consider a binary nearby components versus far components. The algorithm will first want to split the components of $X$ into clusters where the Gaussians of each cluster are close to each other but far from the Gaussians in other clusters. Hopefully, techniques along the lines of those in \cite{clustering1} and \cite{clustering2} could be used to divide the samples into these clusters. But even having done this, one would now need to compute many moments of the (normalized) clusters and would need new algorithms for efficiently converting approximations to these moments into estimates of the individual components.

\end{document}